\documentclass[a4paper,10pt]{article} \errorcontextlines=100
\usepackage[utf8]{inputenc}
\usepackage[T1]{fontenc}
\usepackage[english]{babel}
\usepackage{authblk} %Author affiliation
\usepackage{amsmath,amssymb,amsthm}
\usepackage{enumerate}
\usepackage{mathtools} %coloneqq
\usepackage[svgnames]{xcolor}
\usepackage{tikz,pgf}
\usepackage{bm} %bold italic for random variables
\usepackage[section]{placeins} %figures within sections
\usepackage[pdfauthor={Sven Jäger and Martin Skutella}, pdftitle={Generalizing the Kawaguchi-Kyan bound to stochastic parallel machine scheduling}, pdfsubject={F.2.2 Nonnumerical Algorithms and Problems; G.2.1 Combinatorial Algorithms; G.1.6 Optimization}, pdfkeywords={Stochastic Scheduling; Parallel Machines; Approximation Algorithm; List Scheduling; Weighted Shortest (Expected) Processing Time Rule}]{hyperref}

\usepackage{pgfplots}
\usetikzlibrary{patterns}

\theoremstyle{plain}
\newtheorem{theorem}{Theorem}
\newtheorem{lem}{Lemma}
\newtheorem{corollary}{Corollary}

\theoremstyle{definition}

\theoremstyle{remark}

\newtheorem*{remark}{Remark}

\let\originalparagraph\paragraph
\renewcommand{\paragraph}[2][.]{\originalparagraph{#2#1}}

\title{Generalizing the Kawaguchi-Kyan bound to stochastic parallel machine scheduling}
\author{Sven Jäger\thanks{\url{jaeger@math.tu-berlin.de}}\ }
\author{Martin Skutella\thanks{\url{skutella@math.tu-berlin.de}}}
\affil{Institut für Mathematik, Technische Universität Berlin}

\bibliographystyle{habbrv}

%Added commands

\newcommand{\E}{\mathrm E}
\newcommand{\Var}{\mathrm{Var}}
\newcommand{\N}{\mathbb N}
\newcommand{\Q}{\mathbb Q}
\newcommand{\cL}{\mathcal L}
\newcommand{\cS}{\mathcal S}
\newcommand{\cN}{\mathcal N}
\newcommand{\cM}{\mathcal M}

\begin{document}

\maketitle

\begin{abstract}
Minimizing the sum of weighted completion times on $m$ identical
parallel machines is one of the most important and classical
scheduling problems. For the stochastic variant where processing
times of jobs are random variables, Möhring, Schulz, and Uetz (1999)
presented the first and still best known approximation result
achieving, for arbitrarily many machines, performance
ratio~$1+\frac12(1+\Delta)$, where~$\Delta$ is an upper bound on the
squared coefficient of variation of the processing times. We prove
performance ratio~$1+\frac12(\sqrt{2}-1)(1+\Delta)$
for the same
underlying algorithm -- the Weighted Shortest Expected Processing
Time (WSEPT) rule. For the special case of deterministic scheduling
(i.e.,~$\Delta=0$), our bound matches the tight performance
ratio~$\frac12(1+\sqrt{2})$ of this algorithm (WSPT rule), derived by
Kawaguchi and Kyan in a 1986 landmark paper. We present several
further improvements for WSEPT's performance ratio, one of them
relying on a carefully refined analysis of WSPT yielding, for every
fixed number of machines~$m$, WSPT's exact performance ratio of
order~$\frac12(1+\sqrt{2})-O(1/m^2)$.
\end{abstract}

\section{Introduction}

In an archetypal machine scheduling problem, $n$ independent jobs 
have to be scheduled on $m$ identical parallel machines or processors. 
Each job~$j$ is specified by its processing time~$p_j>0$ and by
its weight $w_j>0$.  In a feasible schedule, every job~$j$
is processed for $p_j$ time units on one of the $m$ machines in an
uninterrupted fashion, and every machine can process at most one job
at a time.  The completion time of job~$j$ in some schedule~$\mathrm S$
is denoted by~$C_j^\mathrm S$. The goal is to compute a schedule
$\mathrm S$ that minimizes the total weighted completion time
$\sum_{j=1}^n w_j C_j^{\mathrm S}$. In the standard classification
scheme of Graham, Lawler, Lenstra, and Rinnooy Kan~\cite{GLLRK79},
this NP-hard scheduling problem is denoted by $P||\sum w_jC_j$.

\paragraph{Weighted Shortest Processing Time Rule}
By a well-known result of Smith~\cite{Smi56}, sequencing the jobs in
order of non-increasing ratios~$w_j/p_j$ gives an optimal
single-machine schedule. List scheduling in this order is known as
the Weighted Shortest Processing Time (WSPT) rule and can also be
applied to identical parallel machines, where it is a
$\frac12(1+\sqrt{2})$-approximation algorithm; see Kawaguchi and
Kyan~\cite{KK86}. A particularly remarkable aspect of Kawaguchi and
Kyan's work is that, in contrast to the vast majority
of approximation results, their analysis does not rely on some kind of
lower bound. Instead, they succeed in explicitly identifying a class
of worst-case instances. In particular, the performance
ratio~$\frac12(1+\sqrt{2})$ is tight: For every~$\varepsilon>0$ there
is a problem instance for which WSPT has approximation ratio at
least~$\frac12(1+\sqrt{2})-\varepsilon$. The instances achieving these
approximation ratios, however, have large numbers of machines
when~$\varepsilon$ becomes small. Schwiegelshohn~\cite{Sch11} gives a
considerably simpler version of Kawaguchi and Kyan's analysis.

\paragraph{Stochastic Scheduling}
Many real-world machine scheduling problems exhibit a certain degree
of uncertainty about the jobs' processing times. This characteristic
is captured by the theory of stochastic machine scheduling, where the
processing time of job~$j$ is no longer a given number~$p_j$ but a
random variable~$\bm p_j$. As all previous work in the area, we
always assume that these random variables are stochastically
independent. At the beginning, only the distributions of these random
variables are known. The actual processing time of a job becomes only
known upon its completion. As a consequence, the solution to a
stochastic scheduling problem is no longer a simple schedule, but a
so-called \emph{non-anticipative scheduling policy}. Precise
definitions on stochastic scheduling policies are given by
M{\"o}hring, Radermacher, and Weiss~\cite{MRW84}. Intuitively,
whenever a machine is idle at time~$t$, a non-anticipative scheduling
policy may decide to start a job of its choice based on the observed
past up to time~$t$ as well as the a priori knowledge of the jobs
processing time distributions and weights. It is, however, not
allowed to anticipate information about the future, i.e., the actual
realizations of the processing times of jobs that have not yet
finished by time~$t$.

It follows from simple examples that, in general, a non-anticipative
scheduling policy cannot yield an optimal schedule for each possible
realization of the processing times. We are therefore looking for a
policy which minimizes the objective in expectation. For the
stochastic scheduling problem considered in this paper, the
goal is to find a non-anticipative scheduling policy that minimizes
the expected total weighted completion time. This problem is denoted
by $P|\bm p_j \sim \mathrm{stoch}|\E[\sum w_j \bm C_j]$. 

\paragraph{Weighted Shortest Expected Processing Time Rule}
The stochastic analogue of the WSPT rule is greedily scheduling the
jobs in order of non-increasing ratios $w_j/\E[\bm p_j]$. Whenever a
machine is idle, the Weighted Shortest Expected Processing Time
(WSEPT) rule immediately starts the next job in this order. For a
single machine this is again optimal; see Rothkopf~\cite{Rot66}.
For identical parallel machines, Cheung, Fischer, Matuschke, and
Megow~\cite{CFMM14} and Im, Moseley, and Pruhs~\cite{IMP15}
independently show that WSEPT does not even achieve constant
performance ratio.
More precisely, for every $R>0$ there is a problem
instance for which WSEPT's expected total weighted completion time is
at least~$R$ times the expected objective function value of an
optimal non-anticipative scheduling policy. Even in the special case of
exponentially distributed processing times, Jagtenberg,
Schwiegelshohn, and Uetz~\cite{JSU13} show a lower bound of~$1.243$
on WSEPT's performance. 
On the positive side,
WSEPT is an optimal policy for the special case of unit weight jobs
with stochastically ordered processing times,
$P|\bm p_j\sim \mathrm{stoch}(\preceq_{\mathrm{st}}) |E[\sum \bm C_j]$;
see Weber, Varaiya, and Walrand~\cite{WVW86}.
Moreover,
Weiss~\cite{Wei90,Wei92} proves asymptotic optimality of WSEPT.
M{\"o}hring, Schulz, and Uetz~\cite{MSU99} show that the WSEPT rule
achieves performance ratio $1+\frac12(1+\Delta)(1-\frac1m)$,
where~$\Delta$ is an upper bound on the squared coefficient of
variation of the processing times.

\paragraph{Further Approximation Results from the Literature}
While there is a PTAS for the deterministic problem
$P||\sum w_jC_j$~\cite{SW00}, no constant-factor approximation
algorithm is known for the stochastic problem
$P|\bm p_j \sim \mathrm{stoch}|\E[\sum w_j \bm C_j]$. WSEPT's
performance ratio~$1+\frac12(1+\Delta)$ (for arbitrarily many
machines) proven by Möhring et al.~\cite{MSU99} is the best
hitherto known performance ratio. The only known
approximation ratio not depending on the jobs' squared coefficient of
variation~$\Delta$ is due to Im et al.~\cite{IMP15} who,
for the special case of unit job weights
$P|\bm p_j \sim \mathrm{stoch}|\E[\sum\bm C_j]$, present an
$O(\log^2n+m\log n)$-approximation algorithm.

The performance ratio~$1+\frac12(1+\Delta)$ has been carried over
to different generalizations of
$P|\bm p_j \sim \mathrm{stoch}|\E[\sum w_j \bm C_j]$. Megow, Uetz,
and Vredeveld~\cite{MUV06} show that it also applies if jobs arrive
online in a list and must immediately and irrevocably be assigned to
machines, on which they can be sequenced optimally. An approximation
algorithm with this performance ratio for the problem on unrelated
parallel machines is designed by Skutella,
Sviridenko, and Uetz~\cite{SSU16}.
If these two features are combined, i.e., in the online list-model
with unrelated machines, Gupta, Moseley, Uetz, and Xie~\cite{GMUX17}
develop a $(8+4\Delta)$-approximation algorithm.

The performance
ratios are usually larger if jobs are released over time: In the
offline setting, the best known approximation algorithm has
performance ratio $2+\Delta$; see Schulz~\cite{Sch08}. This performance ratio
is also achieved for unrelated machines \cite{SSU16} and by a
randomized online algorithm \cite{Sch08}. There exist furthermore 
a deterministic $(\max\{2.618, 2.309+1.309\Delta\})$-approximation
for the online setting \cite{Sch08} and a deterministic 
$(144+72\Delta)$-approximation for the online setting on unrelated
parallel machines \cite{GMUX17}.

\paragraph{Our Contribution and Outline}
We present the first progress on the\linebreak approximability of the basic
stochastic scheduling problem on identical\linebreak parallel machines with
expected total weighted completion time objective\linebreak
$P|\bm p_j \sim \mathrm{stoch}|\E[\sum w_j \bm C_j]$ since the
seminal work of M{\"o}hring et al.~\cite{MSU99}; see
Figure~\ref{fig:performance}.
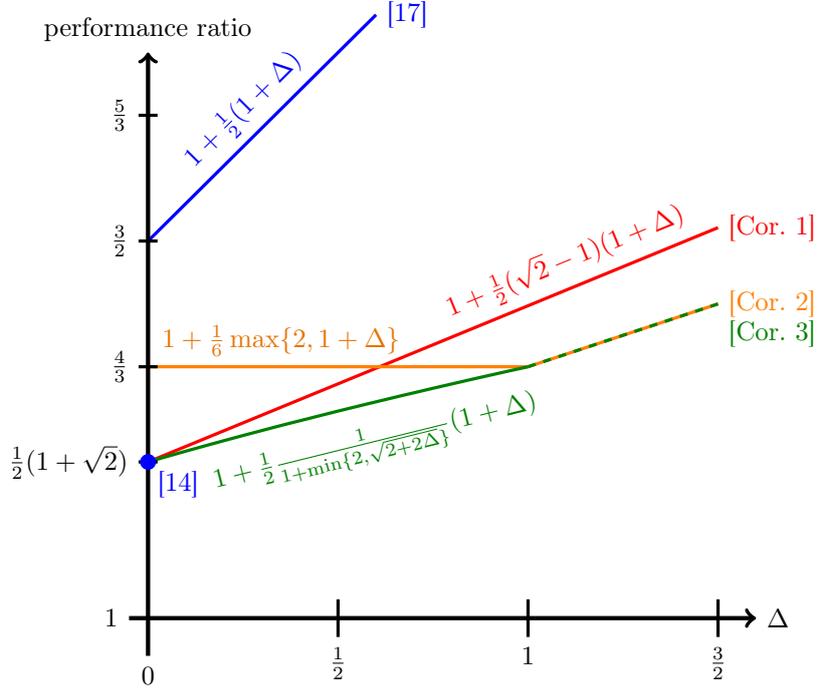
\begin{figure}[t]
\centering
\begin{tikzpicture}[x=5cm,y=10cm,domain=0:1]
\draw [blue,very thick] (0,1.5) to node [sloped,above] {$1+\tfrac12(1+\Delta)$} +(0.6,0.6/2) node [right] {\cite{MSU99}};
\draw [red,very thick] (0,1.2071) to node [sloped, above,near end] {$1+\tfrac12(\sqrt2-1)(1+\Delta)$} +(1.5,1.5*0.2071) node [right] {[Cor.~\ref{red performance guarantee}]};
\draw [orange,very thick] (0,4/3) to node [above,pos=0.35,color=orange!90!black] {$1+\tfrac16\max\{2,1+\Delta\}$} ++(1,0) -- +(0.5,0.5/6) node [right] {[Cor.~\ref{orange performance guarantee}]};
\draw [green!50!black,very thick] plot (\x,{1+1/2*1/(1+sqrt(2+2*\x))*(1+\x)});
\node [rotate=14,green!50!black] at (0.6,1.23) {$1+\tfrac 1 2 \frac{1}{1+\min\{2,\sqrt{2 + 2\Delta}\}}(1+\Delta)$};
\draw [green!50!black,very thick,dashed] (1,4/3) to +(0.5,0.5/6) node[right,yshift=-1.1em] {[Cor.~\ref{green performance guarantee}]};

\draw [->,ultra thick] (-0.05,1) node [left] {$1$} -- (1.6,1) node [right] {$\Delta$}; 
\draw [->,ultra thick] (0,0.95) node [below] {$0$} -- (0,1.75) node [above] {performance ratio};
\foreach \i/\j in {0.5/\frac12,1/1,1.5/\frac32}
\draw [very thick] (\i,0.975) node [below] {$\j$} -- +(0,0.05);
\foreach \i/\j in {1.2071/\frac12(1+\sqrt2),1.3333/\frac43,1.5/\frac32,1.6667/\frac53}
\draw [very thick] (-0.025,\i) node [left] {$\j$} -- +(0.05,0);

\draw [blue] (0,1.2071) node [below right] {\cite{KK86}} node [circle,fill,minimum size=6pt,inner sep=0pt] {};
\end{tikzpicture}
\caption{Bounds on WSEPT's performance ratio}
\label{fig:performance}
\end{figure}
We prove
that WSEPT achieves performance ratio
\begin{align}
1+\tfrac12\min\left\{\frac{\sqrt{(2m-k_m)k_m}-k_m}{m},
\frac{1}{1+\min\{2,\sqrt{2+2\Delta}\}}\right\}(1+\Delta),
\label{eq:perf_ratio}
\end{align}
where $k_m \coloneqq \bigl\lfloor\bigl(1-\frac12\sqrt 2\bigr) m \bigr\rceil$
is the nearest integer to $\bigl(1-\frac12\sqrt 2\bigr)m$.
Notice that, for every number of machines~$m$, the performance ratio
given by the first term of the minimum in~\eqref{eq:perf_ratio} is
bounded from above by~$1+\tfrac12(\sqrt{2}-1)(1+\Delta)$, and for
$m \to \infty$ it converges to this bound. As
$(1+\min\{2,\sqrt{2+2\Delta}\})^{-1} \le \sqrt 2-1$ for all
$\Delta > 0$, when considering an arbitrary number of machines, 
the second term in the minimum dominates the first term.
In the following, we list several points that emphasize the
significance of the new performance ratio~\eqref{eq:perf_ratio}.
\begin{itemize}
\item For the special case of deterministic scheduling
(i.e.,~$\Delta=0$), the machine-independent performance ratio
in~\eqref{eq:perf_ratio} matches the Kawaguchi-Kyan\linebreak
bound~$\frac12(1+\sqrt{2})$, which is known to be tight~\cite{KK86}. In
particular, we dissolve the somewhat annoying discontinuity of the
best previously known bounds~\cite{KK86,MSU99} at~$\Delta=0$; see
Figure~\ref{fig:performance}.
\item Again for deterministic jobs, our machine-dependent
bound\linebreak $1+\frac12(\sqrt{(2m-k_m)k_m}-k_m)/m$ is tight and slightly
improves the 30 years old Kawaguchi-Kyan bound for every fixed
number of machines~$m$; see Figure~\ref{graph of performance guarantee}.
\begin{figure}
 \centering
 \begin{tikzpicture}
  \begin{axis}[height=6cm,width=12cm,xmin=0,xmax=25.9,ymin=1.18,ymax=1.211,axis x line=bottom, axis y line=middle, xlabel={$m$}, ylabel={performance ratio}, x label style={at={(axis description cs:1.01,0.15)},anchor=west}, y label style={at={(axis description cs:0,1.01)},anchor=south},samples=100]
   \addplot[red,thick] coordinates {(0,{(1+sqrt(2))/2}) (25,{(1+sqrt(2))/2})};
   \pgfmathdeclarefunction{k}{1}{\pgfmathparse{floor((1-sqrt(2)/2)*#1+0.5)}}
   \addplot[domain=1.9:5.121] {1+1/2*(sqrt((2*x-k(x))*k(x))-k(x))/x};
   \addplot[domain=5.122:8.535] {1+1/2*(sqrt((2*x-k(x))*k(x))-k(x))/x};
   \addplot[domain=8.536:11.949] {1+1/2*(sqrt((2*x-k(x))*k(x))-k(x))/x};
   \addplot[domain=11.95:15.3635] {1+1/2*(sqrt((2*x-k(x))*k(x))-k(x))/x};
   \addplot[domain=15.3645:18.778] {1+1/2*(sqrt((2*x-k(x))*k(x))-k(x))/x};
   \addplot[domain=18.779:22.192] {1+1/2*(sqrt((2*x-k(x))*k(x))-k(x))/x};
   \addplot[domain=22.193:25] {1+1/2*(sqrt((2*x-k(x))*k(x))-k(x))/x};
   \addplot[only marks, magenta, mark size=1.2pt, samples at={2,...,25}] {1+1/2*(sqrt((2*x-k(x))*k(x))-k(x))/x};
  \end{axis}
 \end{tikzpicture}
 \caption{Graph of the function $m \mapsto 1+\frac12(\sqrt{(2m-k_m)k_m}-k_m)/m$, which for $m \in \N$ gives the worst-case approximation ratio of WSPT for $P||\sum w_j C_j$ with $m$ machines (dots), compared to the machine-independent Kawaguchi-Kyan bound} \label{graph of performance guarantee}
\end{figure}
\item For exponentially distributed processing times ($\Delta=1$),
our results imply that WSEPT achieves performance ratio~$4/3$. This
solves an open problem by Jagtenberg et al.~\cite{JSU13}, who give a
lower bound of~$1.243$ on WSEPT's performance and ask for an
improvement of the previously best known upper bound of~$2-1/m$ due
to M{\"o}hring et al.~\cite{MSU99}.
\item WSEPT's performance bound due to Möhring et al.~\cite{MSU99} also 
holds for the MinIncrease policy, introduced by
Megow et al.~\cite{MUV06}, which is a fixed-assignment policy,
i.e., before the execution of the jobs starts,
it determines for each job on which machine it is going to be processed.
Our stronger bound, together with a lower bound in~\cite{SSU16}, shows
that WSEPT actually beats every fixed-assignment policy.
\end{itemize}

The improved performance ratio in~\eqref{eq:perf_ratio} is derived
as follows. In Section~\ref{Performance guarantee WSEPT} we present
one of the key results of this paper (see
Theorem~\ref{Theorem Performance guarantee WSEPT} below): If WSPT has
performance ratio~$1+\beta$ for some~$\beta$, then WSEPT achieves
performance ratio~$1+\beta(1+\Delta)$ for the stochastic scheduling
problem. For the Kawaguchi-Kyan bound~$1+\beta=\frac12(1+\sqrt{2})$,
this yields performance ratio~$1+\frac12(\sqrt{2}-1)(1+\Delta)$. It
is also interesting to notice that the performance ratio of Möhring
et al.~\cite{MSU99} follows from this theorem by plugging in
$1+\beta = 3/2 - 1/(2m)$, which is WSPT's performance ratio
obtained from the bound of Eastman, Even, and Isaacs~\cite{EEI64}.
We generalize 
Theorem~\ref{Theorem Performance guarantee WSEPT} to performance
ratios w.r.t.\ the weighted sum of\linebreak $\alpha$-points as objective
function, where the $\alpha$-point of a job~$j$ is the point in time 
when it has been processed for exactly $\alpha p_j$ time units. 

The theorems derived in Section~\ref{Performance guarantee WSEPT}
provide tools to carry over bounds for the WSPT rule
to the WSEPT rule. The concrete performance ratio for the WSEPT rule
obtained this way thus depends on good bounds for the WSPT
rule. In Section~\ref{Performance guarantees WSPT based on alpha-points}
we derive performance ratios for WSPT
w.r.t.\ the weighted sum of $\alpha$-points objective.
For $\alpha=\tfrac 1 2$ this performance ratio
follows easily from a result by Avidor, Azar, and
Sgall~\cite{AAS01} for the 2-norm of the machine load vector.
As a consequence we obtain performance ratio
$1+\tfrac 1 6\max\{2,1+\Delta\}$ for WSEPT. By optimizing the choice 
of $\alpha$, we finally obtain the performance ratio
$1+\frac 1 2 (1+\min\{2,\sqrt{2+2\Delta}\})^{-1}(1+\Delta)$.
The various bounds on WSEPT's performance 
derived
in Sections~\ref{Performance guarantee WSEPT} and
\ref{Performance guarantees WSPT based on alpha-points}
are illustrated in Figure~\ref{fig:performance}. Finally, in
Section~\ref{Performance guarantee WSPT} the analysis
of Schwiegelshohn~\cite{Sch11} for the WSPT rule
is refined for every fixed number of machines~$m$,
entailing the machine-dependent bound for the WSEPT rule in
\eqref{eq:perf_ratio}.

\section{Performance ratio of the WSEPT rule} \label{Performance guarantee WSEPT}
 
Let $\Delta \ge \Var[\bm p_j]/\E[\bm p_j]^2$ for all $j \in \{1,\dotsc,n\}$. In Theorems~\ref{Theorem Performance guarantee WSEPT} and \ref{Performance WSEPT based on alpha-points} we demonstrate how performance ratios for the WSPT rule for deterministic scheduling can be carried over to stochastic scheduling. Theorem~\ref{Theorem Performance guarantee WSEPT} starts out from a performance ratio for WSPT with respect to the usual objective function: the weighted sum of completion times. In Theorem~\ref{Performance WSEPT based on alpha-points} this is generalized insofar as a performance ratio of WSPT with respect to the weighted sum of $\alpha$-points is taken as a basis.

\begin{theorem} \label{Theorem Performance guarantee WSEPT}
 If the WSPT rule on $m$ machines has performance ratio $1+\beta_m$ for the problem $P||\sum w_j C_j$, then the WSEPT rule achieves performance ratio $1+\beta_m(1+\Delta)$ for $P|\bm p_j \sim \mathrm{stoch}|\E[\sum w_j \bm C_j]$.
\end{theorem}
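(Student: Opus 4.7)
The plan is to follow and abstract the LP-based analysis of M\"ohring, Schulz, and Uetz, who proved exactly this statement in the special case $\beta_m = \tfrac12(1-\tfrac{1}{m})$ corresponding to the Eastman-Even-Isaacs deterministic bound. Their argument is structured so that the $(1+\Delta)$ factor enters solely through a single variance correction to a Queyranne-type LP lower bound, making the overall scheme indifferent to which deterministic WSPT performance ratio is plugged in. Throughout, I would index the jobs so that $w_1/\bar p_1 \ge \cdots \ge w_n/\bar p_n$ with $\bar p_j \coloneqq \E[\bm p_j]$; this is simultaneously the WSEPT order on the stochastic instance and the WSPT order on the deterministic ``mean'' instance $\bar I = (w_j,\bar p_j)$.

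The upper bound on WSEPT's expected cost comes from the list-scheduling identity $m\,\bm S_j + \bm R_j = \sum_{k<j}\bm p_k$, where $\bm S_j$ is the start time of job $j$ and $\bm R_j \ge 0$ is the residual processing on the other $m-1$ machines at time $\bm S_j$. Taking expectations gives $\E[\bm C_j^{\text{WSEPT}}] \le \bar p_j + \tfrac1m\sum_{k<j}\bar p_k$, which is identical to the pointwise bound satisfied by WSPT on $\bar I$. For the lower bound on $\E[\sum w_j \bm C_j^*]$, I would derive the stochastic Queyranne inequality: for every non-anticipative policy and every $A \subseteq \{1,\dots,n\}$,
\[\sum_{j\in A}\bar p_j\,\E[\bm C_j] \;\ge\; \frac{1}{2m}\Bigl(\sum_{j\in A}\bar p_j\Bigr)^2 + \frac{1}{2}\sum_{j\in A}\bar p_j^2 - \frac{m-1}{2m}\sum_{j\in A}\Var[\bm p_j],\]
obtained by applying the deterministic Queyranne inequality pathwise, taking expectations, and using independence together with the identity $\E[\bm p_j\bm C_j] = \bar p_j\E[\bm C_j] + \Var[\bm p_j]$ (valid because non-anticipativity forces $\bm S_j$ to be independent of $\bm p_j$). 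Using $\Var[\bm p_j]\le\Delta\bar p_j^2$ yields an LP lower bound $L^{\mathrm{stoch}}$ on the optimal expected cost whose value differs from the deterministic Queyranne LP value $L(\bar I)$ by a term proportional to~$\Delta$.

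To combine the ingredients, I would use the assumed WSPT performance ratio in LP-relative form, $\sum_j w_j C_j^{\text{WSPT}(\bar I)}\le(1+\beta_m)\,L(\bar I)$. This form is available for both the Eastman-Even-Isaacs and, via Schwiegelshohn's treatment, the Kawaguchi-Kyan proofs, since both analyses actually compare WSPT's cost to the Queyranne polytope. Combined with the identical pointwise upper bound on WSEPT's expected completion times, this yields $\sum_j w_j\E[\bm C_j^{\text{WSEPT}}]\le(1+\beta_m)\,L(\bar I)$. A careful LP comparison then converts the $\Delta$-proportional gap between $L(\bar I)$ and $L^{\mathrm{stoch}}$ into exactly an extra factor $\beta_m\Delta$ of slack, producing the claimed ratio $1+\beta_m(1+\Delta)$ against $L^{\mathrm{stoch}} \le \E[\sum w_j \bm C_j^*]$.

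The main obstacle I expect is precisely this final matching step: obtaining the multiplicative $(1+\Delta)$ factor on $\beta_m$ rather than the naive $(1+\beta_m)(1+\Delta) = 1+\beta_m + \Delta + \beta_m\Delta$ that a black-box application of the two inequalities would yield. This requires the $(1+\beta_m)$-slack in the deterministic WSPT/LP analysis to be distributed over the Queyranne inequalities in the same pattern as the variance correction in the stochastic lower bound. For the EEI and KK analyses this matching can be carried out; for a hypothetical $\beta_m$ proved only against $\text{OPT}(\bar I)$ as a black box, one would first need to re-derive its LP-relative version before the reduction can yield the claimed tight factor.
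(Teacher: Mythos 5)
Your proposal does not prove the theorem as stated, and the gap is exactly the one you flag at the end as ``the main obstacle'': your scheme needs the WSPT guarantee in LP-relative form, $\sum_j w_j C_j^{\mathrm{WSPT}}\le(1+\beta_m)\,L$, where $L$ is the Queyranne-type lower bound, whereas the theorem's hypothesis is a performance ratio against the true deterministic optimum. For the Eastman--Even--Isaacs value $\beta_m=\tfrac12(1-\tfrac1m)$ the LP-relative form is indeed what is proved, which is why the Möhring--Schulz--Uetz argument works there. But for Kawaguchi--Kyan it is false: their analysis (and Schwiegelshohn's) compares WSPT to an explicitly identified worst-case optimum, not to the Queyranne polytope. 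On the worst-case instance itself (short jobs of total length $m$ plus $k\approx(1-\tfrac{\sqrt2}{2})m$ long jobs of length $1+\sqrt2$) the optimal schedule has unbalanced machine loads, so $\mathrm{OPT}\approx 2.414m$ strictly exceeds the Queyranne bound $\approx 2.311m$, and $\mathrm{WSPT}/L\approx 2.914/2.311\approx 1.26 > \tfrac12(1+\sqrt2)$. Hence no ``re-derivation'' of an LP-relative Kawaguchi--Kyan bound with the same constant is possible, and your reduction cannot deliver the headline ratio $1+\tfrac12(\sqrt2-1)(1+\Delta)$.

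The paper's proof sidesteps the need for any lower bound on $\mathrm{OPT}$. It sets $\rho_j:=w_j/\E[\bm p_j]$ and, for each realization $(p_1,\dotsc,p_n)$, considers the modified deterministic instance with weights $\rho_jp_j$; there all Smith ratios equal $\rho_j$, so the realized WSEPT schedule \emph{is} a WSPT schedule for that instance, and the black-box ratio $1+\beta_m$ against the true optimum applies pathwise. Since the realized schedule of the optimal policy is feasible for the modified instance, one gets $\E[\sum_j\rho_j\bm p_j\bm C_j^{\mathrm{WSEPT}}]\le(1+\beta_m)\E[\sum_j\rho_j\bm p_j\bm C_j^{\Pi^*}]$, and the identity $\E[\bm p_j\bm C_j^{\Pi}]=\E[\bm p_j]\E[\bm C_j^{\Pi}]+\Var[\bm p_j]$ (your own observation, from non-anticipativity) converts both sides back to the original objective, with the variance terms contributing exactly the extra $\beta_m\Delta$. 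If you want to salvage your route, this weight modification is the missing idea; without it, the ``careful LP comparison'' in your final step has nothing to stand on for the bounds that matter.
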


The reason why the bound for WSPT does not directly carry over to WSEPT is that under a specific realization of the processing times the schedule obtained by the WSEPT policy may differ from the WSPT schedule for this realization. Still, under every realization the WSEPT schedule is a list schedule. Hence, usually a bound that is valid for every list schedule is used: The objective value of a list schedule on $m$ machines is at most $1/m$ times the objective value of the list schedule on a single machine plus $(m-1)/m$ times the weighted sum of processing times. This bound, holding because a list schedule assigns each job to the currently least loaded machine, is applied realizationwise to obtain a corresponding bound on the expected values in stochastic scheduling (cf.\ \cite[Lemma~4.1]{MSU99}), which can be compared to an LP-based lower bound on the expected total completion time under an optimal policy. 

In order to benefit from the precise bounds known for the WSPT rule nevertheless, we regard the following modified stochastic scheduling problem: For each job, instead of its weight $w_j$, we are given a weight factor $\rho_j$. The actual weight of a job is $\rho_j$ times its actual processing time, i.e., if a job takes longer, it also becomes more important. The goal is again to minimize the total weighted completion time. For the thus defined stochastic scheduling problem list scheduling in order of $\rho$-values has the nice property that it creates a WSPT schedule in every realization. So, for this scheduling problem any performance ratio of WSPT directly carries over to this list scheduling policy. In the following proof of Theorem~\ref{Theorem Performance guarantee WSEPT} we first compare the expected total weighted completion time of a WSEPT schedule for the original problem to the expected objective value list scheduling in order of $\rho_j$ for the modified problem, then apply the performance ratio of the WSPT rule, and finally compare the expected total weighted completion time of the optimal schedule for the modified problem to the expected objective value of the optimal policy for the original problem. The transitions between the two problems lead to the additional factor of $1+\Delta$ in the performance ratio.

\begin{proof}
Consider an instance of $P|\bm p_j \sim \mathrm{stoch}|\E[\sum w_j \bm C_j]$ consisting of $n$ jobs and $m$ machines, and let $\beta \coloneqq \beta_m$ and $\rho_j \coloneqq w_j/\E[\bm p_j]$
for $j \in \{1,\dotsc,n\}$. For every realization $(p_1,\dotsc,p_n)$ of the processing times we consider the instance $I(p_1,\dotsc,p_n)$ of $P||\sum w_j C_j$ which consists of $n$ jobs with processing times $p_1,\dotsc,p_n$ and weights $\rho_1 p_1, \dotsc, \rho_n p_n$, so that the jobs in the instance $I(p_1,\dotsc,p_n)$ have Smith ratios $\rho_1,\dotsc,\rho_n$ for all possible realizations. Therefore, for every realization $(p_1,\dotsc,p_n)$ the schedule obtained by the WSEPT policy is a WSPT schedule for $I(p_1,\dotsc,p_n)$. Let $C_j^{\mathrm{WSEPT}}(p_1,\dotsc,p_n)$ denote the completion time of job~$j$ in the schedule obtained by the WSEPT policy under the realization $(p_1,\dotsc,p_n)$, let $C_j^*(I(p_1,\dotsc,p_n))$ denote its completion time in an optimal schedule for $I(p_1,\dotsc,p_n)$, and let $C_j^{\Pi^*}(p_1,\dotsc,p_n)$ denote $j$'s completion time in the schedule constructed by an optimal stochastic scheduling policy under the realization $(p_1,\dotsc,p_n)$. For every realization $(p_1,\dotsc,p_n)$ of the processing times, since the WSEPT schedule follows the WSPT rule for $I(p_1,\dotsc,p_n)$, its objective value is bounded by
\[\sum_{j=1}^n (\rho_j p_j) C_j^{\mathrm{WSEPT}}(p_1,\dotsc,p_n) \le (1+\beta) \cdot \sum_{j=1}^n (\rho_j p_j) C_j^*(I(p_1,\dotsc,p_n)).\]
On the other hand, as the schedule obtained by an optimal stochastic scheduling policy is feasible for the instance $I(p_1,\dotsc,p_n)$, it holds that
\[\sum_{j=1}^n (\rho_j p_j) C_j^*(I(p_1,\dotsc,p_n)) \le \sum_{j=1}^n (\rho_j p_j) C_j^{\Pi^*}(p_1,\dotsc,p_n).\]
 Putting these two inequalities together and taking expectations, we get the inequality
 \[\E\left[\sum_{j=1}^n \rho_j \bm p_j \bm C_j^{\mathrm{WSEPT}}\right] \le (1+\beta) \cdot \E\left[ \sum_{j=1}^n \rho_j \bm p_j \bm C_j^{\Pi^*}\right],\]
 where $\bm C_j^{\mathrm{WSEPT}} = C_j^{\mathrm{WSEPT}}(\bm p_1,\dotsc,\bm p_n)$ and $\bm C_j^{\Pi^*} = C_j^{\Pi^*}(\bm p_1,\dotsc,\bm p_n)$.
 Using the latter inequality, we can bound the expected total weighted completion time of the WSEPT rule:
 \begin{align*}
  &\E\left[\sum_{j=1}^n w_j \bm C_j^{\mathrm{WSEPT}}\right] 
  = \sum_{j=1}^n \rho_j \E[\bm p_j] \E[\bm C_j^{\mathrm{WSEPT}}] \\
  \stackrel{(*)}=\ &\sum_{j=1}^n \rho_j \E[\bm p_j \bm C_j^{\mathrm{WSEPT}}] - \sum_{j=1}^n \rho_j \Var[\bm p_j] \\
  =\ &\E\left[\sum_{j=1}^n \rho_j \bm p_j \bm C_j^{\mathrm{WSEPT}}\right] - \sum_{j=1}^n \rho_j \Var[\bm p_j] \\
  \le\ &(1+\beta) \cdot \E\left[\sum_{j=1}^n \rho_j \bm p_j \bm C_j^{\Pi^*}\right] - \sum_{j=1}^n \rho_j \Var[\bm p_j] \\
  =\ &(1+\beta) \cdot \sum_{j=1}^n \rho_j \E[\bm p_j \bm C_j^{\Pi^*}] - \sum_{j=1}^n \rho_j \Var[\bm p_j]\\
  \stackrel{(*)}=\ &(1+\beta) \cdot \left(\sum_{j=1}^n \rho_j \E[\bm p_j] \E[\bm C_j^{\Pi^*}] + \sum_{j=1}^n \rho_j \Var[\bm p_j]\right) - \sum_{j=1}^n \rho_j \Var[\bm p_j] \\
  =\ &(1+\beta) \cdot \sum_{j=1}^n w_j \E[\bm C_j^{\Pi^*}] + \beta \sum_{j=1}^n \rho_j \Var[\bm p_j] \\
  \le\ &(1+\beta) \cdot \sum_{j=1}^n w_j \E[\bm C_j^{\Pi^*}] + \Delta\beta \sum_{j=1}^n w_j \E[\bm p_j] \\
  \le\ &(1+\beta(1+\Delta)) \cdot \sum_{j=1}^n w_j \E[\bm C_j^{\Pi^*}] = (1+\beta(1+\Delta)) \cdot \E\left[\sum_{j=1}^n w_j \bm C_j^{\Pi^*}\right].
 \end{align*}
 The equalities marked with $(*)$ hold because for any stochastic scheduling policy~$\Pi$ and all~$j$
 \begin{multline*}
  \E[\bm p_j \bm C_j^{\Pi}] = \E[\bm p_j (\bm S_j^{\Pi} + \bm p_j)] 
  = \E[\bm p_j \bm S_j^{\Pi}] + \E[\bm p_j^2] \\= \E[\bm p_j] \E[\bm S_j^{\Pi}] + \E[\bm p_j]^2 + \Var[\bm p_j] = \E[\bm p_j] \E[\bm C_j^{\Pi}] + \Var[\bm p_j],
 \end{multline*}
 where $\bm S_j^\Pi$ denotes the starting time of job $j$ under policy $\Pi$. The independence of $\bm p_j$ and $\bm S_j^\Pi$ follows from the non-anticipativity of policy~$\Pi$, and the last inequality uses the fact that $\E[\bm p_j] \le \E[\bm C_j^{\Pi^*}]$ for every job~$j$.
\end{proof}

By plugging in the Kawaguchi-Kyan bound, we immediately get the following performance ratio (see Figure~\ref{fig:performance}).

\begin{corollary} \label{red performance guarantee}
 The WSEPT rule has performance ratio $1 + \frac 1 2 (\sqrt 2 - 1)(1+\Delta)$ for the problem $P|\bm p_j \sim \mathrm{stoch}|\E[\sum w_j \bm C_j]$.
\end{corollary}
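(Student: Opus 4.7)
The plan is to simply invoke Theorem~\ref{Theorem Performance guarantee WSEPT} with the Kawaguchi-Kyan bound plugged in as the WSPT guarantee. Specifically, Kawaguchi and Kyan~\cite{KK86} established that WSPT is a $\frac12(1+\sqrt{2})$-approximation for $P||\sum w_j C_j$, uniformly in the number of machines $m$. Rewriting this bound in the form $1+\beta_m$ used by Theorem~\ref{Theorem Performance guarantee WSEPT}, one reads off
\[
\beta_m \;=\; \tfrac12(1+\sqrt{2}) - 1 \;=\; \tfrac12(\sqrt{2}-1),
\]
which is a valid choice of $\beta_m$ for every $m$.

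Applying Theorem~\ref{Theorem Performance guarantee WSEPT} then immediately yields that WSEPT has performance ratio $1 + \beta_m(1+\Delta) = 1 + \tfrac12(\sqrt{2}-1)(1+\Delta)$ for $P|\bm p_j \sim \mathrm{stoch}|\E[\sum w_j \bm C_j]$, which is exactly the claim. There is no real obstacle here: the entire technical content is carried by Theorem~\ref{Theorem Performance guarantee WSEPT}, and the only thing to verify is the arithmetic translation between the multiplicative form $\frac12(1+\sqrt{2})$ stated by Kawaguchi-Kyan and the additive form $1+\beta_m$ required by the theorem. Since the Kawaguchi-Kyan ratio holds uniformly in $m$, the resulting WSEPT guarantee is likewise independent of~$m$.
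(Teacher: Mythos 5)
Your proposal is correct and matches the paper's own derivation exactly: the corollary is obtained by plugging the Kawaguchi--Kyan bound $\frac12(1+\sqrt{2}) = 1+\frac12(\sqrt{2}-1)$ into Theorem~\ref{Theorem Performance guarantee WSEPT}, so $\beta_m = \frac12(\sqrt2-1)$ for every $m$ and the claimed ratio follows immediately. The arithmetic translation you verify is the only content of the step.
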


For $\alpha \in [0,1]$ the \emph{$\alpha$-point} $C_j^{\mathrm S}(\alpha)$ of a job~$j$ is the point in time at which it has been processed for $\alpha p_j$ time units. Introduced by Hall, Shmoys, and Wein~\cite{HSW96} in order to convert a preemptive schedule into a non-preemptive one, the concept of $\alpha$-points is often used in the \emph{design} of algorithms (see e.g.\ \cite{Goe97,CMNS01,GQS+02,Sku16}). In contrast, we use them to define an alternative objective function in order to improve the \emph{analysis} of the WSEPT rule.

We consider as objective function the weighted sum of $\alpha$-points $\sum_{j=1}^n w_j C_j^{\mathrm S}(\alpha)$ for $\alpha \in [0,1]$, generalizing the weighted sum of completion times. For every $\alpha$ the weighted sum of $\alpha$-points differs only by the constant $(1-\alpha)\sum_{j=1}^n w_j p_j$ from the weighted sum of completion times. So as for optimal solutions the objective functions are equivalent. The same applies to the stochastic variant: Here the two objectives differ by the constant $(1-\alpha)\sum_{j=1}^n w_j \E[\bm p_j]$, whence they have the same optimal policies. We now generalize Theorem~\ref{Theorem Performance guarantee WSEPT} to the (expected) weighted sum of $\alpha$-points.

\begin{theorem} \label{Performance WSEPT based on alpha-points}
 If the WSPT rule has performance ratio $1+\beta$ for the problem $P||\sum w_j C_j(\alpha)$, then the WSEPT rule has performance ratio $1+\beta(1+\Delta)$ for the problem $P|\bm p_j \sim \mathrm{stoch}|\E[\sum w_j\bm C_j(\alpha)]$ and $1 + \beta \cdot \max\{1,\alpha(1+\Delta)\}$ for the problem $P|\bm p_j \sim \mathrm{stoch} |\E[\sum w_j \bm C_j]$.
\end{theorem}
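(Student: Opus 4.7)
The plan is to follow the template of the proof of Theorem~\ref{Theorem Performance guarantee WSEPT} almost verbatim, with the completion time $\bm C_j$ systematically replaced by its $\alpha$-point $\bm C_j(\alpha) = \bm S_j + \alpha \bm p_j$. Setting $\rho_j \coloneqq w_j/\E[\bm p_j]$ and, for every realization $(p_1,\dotsc,p_n)$, considering the auxiliary deterministic instance $I(p_1,\dotsc,p_n)$ with processing times $p_j$ and weights $\rho_j p_j$, the WSEPT schedule is again a WSPT schedule for $I(p_1,\dotsc,p_n)$. Invoking the assumed WSPT performance ratio for the $\alpha$-point objective realization-wise, comparing with the schedule of an optimal policy~$\Pi^*$, and taking expectations gives
\[
\E\Bigl[\sum_j \rho_j \bm p_j \bm C_j^{\mathrm{WSEPT}}(\alpha)\Bigr] \leq (1+\beta)\, \E\Bigl[\sum_j \rho_j \bm p_j \bm C_j^{\Pi^*}(\alpha)\Bigr].
\]
By non-anticipativity, identity $(*)$ from the proof of Theorem~\ref{Theorem Performance guarantee WSEPT} generalizes to $\E[\bm p_j \bm C_j^\Pi(\alpha)] = \E[\bm p_j]\E[\bm C_j^\Pi(\alpha)] + \alpha\Var[\bm p_j]$, so rerunning the same chain of equalities yields $W_\alpha \leq (1+\beta) P_\alpha + \alpha\beta \sum_j \rho_j \Var[\bm p_j]$, where $W_\alpha$ and $P_\alpha$ denote the expected weighted sums of $\alpha$-points under WSEPT and $\Pi^*$. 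Bounding $\sum_j \rho_j \Var[\bm p_j] \leq \Delta \sum_j w_j \E[\bm p_j]$ and using $\alpha \E[\bm p_j] \leq \E[\bm C_j^{\Pi^*}(\alpha)]$ (valid since $\bm S_j^{\Pi^*} \geq 0$) to absorb the remainder into $\beta\Delta P_\alpha$ establishes the first statement $W_\alpha \leq (1+\beta(1+\Delta)) P_\alpha$.

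For the second statement, I exploit the decomposition $\bm C_j = \bm C_j(\alpha) + (1-\alpha) \bm p_j$, which yields $W = W_\alpha + (1-\alpha) B$ and $P = P_\alpha + (1-\alpha) B$, where $B \coloneqq \sum_j w_j \E[\bm p_j]$ and $W,P$ are the corresponding expected weighted sums of completion times. Instead of invoking the first statement as a black box, I retain the sharper intermediate bound $W_\alpha \leq (1+\beta) P_\alpha + \alpha\beta\Delta B$ from the computation above. Substituting $P_\alpha = P - (1-\alpha) B$ and collecting terms then yields the compact inequality
\[
W \leq (1+\beta) P + \beta B \bigl(\alpha(1+\Delta) - 1\bigr).
\]

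A two-case analysis finishes the argument. If $\alpha(1+\Delta) \leq 1$, the bracketed term is non-positive, so $W \leq (1+\beta) P$. Otherwise, $B \leq P$ (since $\E[\bm p_j] \leq \E[\bm C_j^{\Pi^*}]$ for every~$j$) gives $\beta B (\alpha(1+\Delta) - 1) \leq \beta P(\alpha(1+\Delta)-1)$ and hence $W \leq (1+\beta\alpha(1+\Delta)) P$. Combining both cases yields the asserted ratio $W \leq (1+\beta\max\{1,\alpha(1+\Delta)\}) P$.

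The main obstacle I expect is the realization that a black-box use of the first statement is too lossy for the second: plugging $W_\alpha \leq (1+\beta(1+\Delta)) P_\alpha$ into $W = W_\alpha + (1-\alpha) B$ produces only $W \leq (1+\beta(1+\Delta)) P - \beta(1+\Delta)(1-\alpha) B$, which cannot be converted into $(1+\beta) P$ in the small-$\alpha$ regime because $B$ may be arbitrarily smaller than $P$. Keeping the variance term $\alpha\beta\Delta B$ separate rather than charging it against $\beta\Delta P_\alpha$ is exactly what produces the clean threshold at $\alpha(1+\Delta) = 1$.
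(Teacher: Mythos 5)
Your proposal is correct and follows essentially the same route as the paper: the realization-wise application of the WSPT bound to the modified instance with weights $\rho_j p_j$, the non-anticipativity identity $\E[\bm p_j\bm C_j^\Pi(\alpha)]=\E[\bm p_j]\E[\bm C_j^\Pi(\alpha)]+\alpha\Var[\bm p_j]$, and the final inequality $W\le(1+\beta)P+\beta(\alpha(1+\Delta)-1)\sum_j w_j\E[\bm p_j]$ with the same case analysis all appear in the paper's proof. Your only deviation is presentational -- you obtain the completion-time bound by substituting the constant offset $(1-\alpha)\sum_j w_j\E[\bm p_j]$ into the sharper intermediate $\alpha$-point bound, whereas the paper reruns the full chain of (in)equalities using the identity $\E[\bm p_j\bm C_j^\Pi(\alpha)]=\E[\bm p_j]\E[\bm C_j^\Pi]+\alpha\Var[\bm p_j]-(1-\alpha)\E[\bm p_j]^2$; both computations are identical in substance, and your closing remark about why a black-box use of the first statement is too lossy correctly identifies why the paper also treats the two objectives separately.
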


The proof uses the same idea as the proof of Theorem~\ref{Theorem Performance guarantee WSEPT}: The bound for $P||\sum w_j C_j(\alpha)$ is again applied realizationwise to the modified stochastic problem described above.

\begin{proof}
 We use the notation of the proof of Theorem~\ref{Theorem Performance guarantee WSEPT} for $\alpha$-points $C_j(\alpha)$ instead of completion times $C_j$. Then we have by assumption that
 \begin{align*} 
  \sum_{j=1}^n (\rho_j p_j) C_j^{\mathrm{WSEPT}}(\alpha)(p_1,\dotsc,p_n) &\le (1+\beta) \sum_{j=1}^n (\rho_j p_j) C_j^*(\alpha)(I(p_1,\dotsc,p_n))\\
  &\le (1+\beta) \sum_{j=1}^n (\rho_jp_j) C_j^{\Pi^*}(\alpha)(p_1,\dotsc,p_n),
 \end{align*} 
 where the last inequality holds because the schedule obtained by the optimal stochastic scheduling policy is feasible for $I(p_1,\dotsc,p_n)$.
 This carries over to expected values:
 \begin{equation}
  \E\left[\sum_{j=1}^n \rho_j \bm p_j \bm C_j^{\mathrm{WSEPT}}(\alpha)\right] \le (1+ \beta)\cdot \E\left[\sum_{j=1}^n \rho_j \bm p_j \bm C_j^{\Pi^*}(\alpha)\right]. \label{eq:inequality expected values random weights}
 \end{equation}
 Furthermore, for any non-anticipative stochastic scheduling policy
 \begin{align}
  \E[\bm p_j \bm C_j^\Pi(\alpha)] &= %\E[\bm p_j(\bm S_j^\Pi+\alpha \bm p_j)] = 
   \E[\bm p_j \bm S_j^\Pi] + \alpha \E[\bm p_j^2] = \E[\bm p_j] \E[\bm S_j^\Pi] + \alpha(\E[\bm p_j]^2 + \Var[\bm p_j])\nonumber \\
  &= \E[\bm p_j] \E[\bm C_j^\Pi(\alpha)] + \alpha \Var[\bm p_j] \label{eq:weighted expected alpha-point} \\
  &= \E[\bm p_j] \E[\bm C_j^\Pi] + \alpha \Var[\bm p_j] - (1-\alpha) \E[\bm p_j]^2. \label{eq:weighted expected completion time}
 \end{align}

Therefore,
 \begin{align*} 
  &\E\left[\sum_{j=1}^n w_j \bm C_j^{\mathrm{WSEPT}}(\alpha)\right] = \sum_{j=1}^n \rho_j \E[\bm p_j] \E[\bm C_j^{\mathrm{WSEPT}}(\alpha)]\\
  \stackrel{\eqref{eq:weighted expected alpha-point}}=\ &\sum_{j=1}^n \rho_j \E[\bm p_j \bm C_j^{\mathrm{WSEPT}}(\alpha)] - \alpha \sum_{j=1}^n \rho_j \Var[\bm p_j]) \\
  =\ &\E\left[\sum_{j=1}^n \rho_j \bm p_j \bm C_j^{\mathrm{WSEPT}}(\alpha)\right] - \alpha \sum_{j=1}^n \rho_j \Var[\bm p_j] \\
  \stackrel{\eqref{eq:inequality expected values random weights}}\le\ &(1+\beta) \cdot \E\left[\sum_{j=1}^n \rho_j \bm p_j \bm C_j^{\Pi^*}(\alpha)\right] - \alpha \sum_{j=1}^n \rho_j \Var[\bm p_j] \\
  =\ &(1+\beta) \cdot \sum_{j=1}^n \rho_j \E[\bm p_j \bm C_j^{\Pi^*}(\alpha)] - \alpha \sum_{j=1}^n \rho_j \Var[\bm p_j] \\
  \stackrel{\eqref{eq:weighted expected alpha-point}}=\ &(1+\beta) \cdot \left(\sum_{j=1}^n \rho_j \E[\bm p_j] \E[\bm C_j^{\Pi^*}(\alpha)] + \alpha \sum_{j=1}^n \rho_j \Var[\bm p_j]\right) - \alpha \sum_{j=1}^n \rho_j \Var[\bm p_j] \\
  =\ &(1+\beta) \cdot \sum_{j=1}^n w_j \E[\bm C_j^{\Pi^*}(\alpha)] + \alpha \beta \sum_{j=1}^n \rho_j \Var[\bm p_j] \\
  \le\ &(1+\beta) \cdot \sum_{j=1}^n w_j \E[\bm C_j^{\Pi^*}(\alpha)] + \alpha \beta \Delta \sum_{j=1}^n w_j \underbrace{\E[\bm p_j]}_{\le \frac{\E[\bm C_j^{\Pi^*}(\alpha)]}{\alpha}} \\
  \le\ &(1+\beta(1+\Delta)) \cdot \sum_{j=1}^n w_j \E[\bm C_j^{\Pi^*}(\alpha)] = (1+\beta(1+\Delta)) \cdot \E\left[\sum_{j=1}^n w_j \bm C_j^{\Pi^*}(\alpha)\right],
 \end{align*}
 and
 \begin{align*}
  &\E\left[\sum_{j=1}^n w_j \bm C_j^{\mathrm{WSEPT}}\right] = \sum_{j=1}^n \rho_j \E[\bm p_j] \E[\bm C_j^{\mathrm{WSEPT}}]\\
  \stackrel{\eqref{eq:weighted expected completion time}}=\ &\sum_{j=1}^n \rho_j \E[\bm p_j \bm C_j^{\mathrm{WSEPT}}(\alpha)] - \sum_{j=1}^n \rho_j (\alpha \Var[\bm p_j] - (1-\alpha)\E[\bm p_j]^2) \\
  =\ &\E\left[\sum_{j=1}^n \rho_j \bm p_j \bm C_j^{\mathrm{WSEPT}}(\alpha)\right] - \sum_{j=1}^n \rho_j (\alpha \Var[\bm p_j] - (1-\alpha)\E[\bm p_j]^2) \\
  \stackrel{\eqref{eq:inequality expected values random weights}}\le\ &(1+\beta) \cdot \E\left[\sum_{j=1}^n \rho_j \bm p_j \bm C_j^{\Pi^*}(\alpha)\right] - \sum_{j=1}^n \rho_j (\alpha \Var[\bm p_j] - (1-\alpha)\E[\bm p_j]^2) \\
  =\ &(1+\beta) \cdot \sum_{j=1}^n \rho_j \E[\bm p_j \bm C_j^{\Pi^*}(\alpha)] - \sum_{j=1}^n \rho_j (\alpha \Var[\bm p_j] - (1-\alpha)\E[\bm p_j]^2) \\
  \stackrel{\eqref{eq:weighted expected completion time}}=\ &(1+\beta) \cdot \left(\sum_{j=1}^n \rho_j \E[\bm p_j] \E[\bm C_j^{\Pi^*}] + \sum_{j=1}^n \rho_j (\alpha \Var[\bm p_j] - (1-\alpha) \E[\bm p_j]^2)\right) \\
     &\hspace{4em} - \sum_{j=1}^n \rho_j (\alpha \Var[\bm p_j] - (1-\alpha)\E[\bm p_j]^2) \\
  =\ &(1+\beta) \cdot \sum_{j=1}^n w_j \E[\bm C_j^{\Pi^*}] + \beta \sum_{j=1}^n \rho_j (\alpha \Var[\bm p_j] - (1-\alpha) \E[\bm p_j]^2) \\
  \le\ &(1+\beta) \cdot \sum_{j=1}^n w_j \E[\bm C_j^{\Pi^*}] + \beta \sum_{j=1}^n \rho_j (\alpha \Delta - (1-\alpha)) \E[\bm p_j]^2 \\
  =\ &(1+\beta) \cdot \sum_{j=1}^n w_j \E[\bm C_j^{\Pi^*}] + \beta (\alpha(1+\Delta)-1) \sum_{j=1}^n w_j \underbrace{\E[\bm p_j]}_{\in [0,\E[\bm C_j^{\Pi^*}]]} \\
  \le\ &(1+\beta) \cdot \sum_{j=1}^n w_j \E[\bm C_j^{\Pi^*}] + \beta \cdot \max\{0,\alpha (1+\Delta) - 1\} \sum_{j=1}^n w_j \E[\bm C_j^{\Pi^*}] \\
  \le\ &(1 + \beta \cdot \max\{1,\alpha(1+\Delta)\}) \cdot \E\left[\sum_{j=1}^n w_j \bm C_j^{\Pi^*}\right]. \qedhere
\end{align*}
\end{proof}

Theorem~\ref{Theorem Performance guarantee WSEPT} follows from Theorem~\ref{Performance WSEPT based on alpha-points} by plugging in $\alpha = 1$.

\section{Performance ratios for WSPT with weighted sum of $\alpha$-points objective} \label{Performance guarantees WSPT based on alpha-points}

In this section we derive performance ratios for $P||\sum w_j C_j(\alpha)$. The two classical performance ratios for $P||\sum w_j C_j$ of Eastman, Even, and Isaacs~\cite{EEI64} and of Kawaguchi, and Kyan~\cite{KK86} can both be generalized to this problem. The Eastman-Even-Isaacs bound can be generalized for every $\alpha \in (0,1]$, whereas the Kawaguchi-Kyan bound carries over only for $\alpha \in [\tfrac 1 2, 1]$. In return, the generalized Kawaguchi-Kyan bound is better for these $\alpha$.

For a problem instance $I$ denote by~$\cN(I)$ the job set, by $C_j^{\mathrm{WSPT}}(\alpha)(I)$ the $\alpha$-point of job~$j$ in the WSPT schedule for $I$, and by $C^*_j(\alpha)(I)$ the $\alpha$-point of job~$j$ in some fixed (`the') optimal schedule for $I$. Hence $C_j^{\mathrm{WSPT}}(1)(I) = C_j^{\mathrm{WSPT}}(I)$ is the completion time of $j$ and $C_j^{\mathrm{WSPT}}(0)(I) = S_j^{\mathrm{WSPT}}(I)$ is the starting time of $j$ in the WSPT schedule, and analogously for the optimal schedule. Furthermore, let $M_i^{\mathrm{WSPT}}(I)$ and $M_i^*(I)$ denote the load of the $i$-th machine and $M_{\mathrm{min}}^{\mathrm{WSPT}}(I)$ and $M_{\mathrm{min}}^{*}(I)$ denote the load of the least loaded machine, in the WSPT schedule and the optimal schedule for $I$, respectively. Moreover, let $\mathrm{WSPT}_\alpha(I)$ and $\mathrm{OPT}_\alpha(I)$ denote the weighted sum of $\alpha$-points of the schedule obtained by the WSPT rule and of the optimal schedule, respectively. Finally, denote by $\lambda_\alpha(I) \coloneqq \mathrm{WSPT}_\alpha(I)/\mathrm{OPT}_\alpha(I)$ the approximation ratio of the WSPT rule for the instance $I$. 
The WSPT rule does not specify which job to take first if multiple jobs have the same ratio $w_j/p_j$. Since we are interested in the performance of this rule in the worst case, we have to prove the performance ratio for all possible orders of these jobs. Hence, we may assume that this is done according to an arbitrary order of the jobs given as part of the input.

It is a well-known fact (see e.g.~\cite{Sch11}) that for the weighted sum of completion times objective the worst case of WSPT occurs if all jobs have the same Smith ratio $w_j/p_j$. This generalizes to the weighted sum of $\alpha$-points objective.

\begin{lem} \label{worst case unit ratio}
 For every $\alpha \in [0,1]$ and every instance $I$ of $P||\sum w_j C_j(\alpha)$ there is an instance $I'$ of $P||\sum p_j C_j(\alpha)$ with the same number of machines such that $\lambda_\alpha(I') \ge \lambda_\alpha(I)$.
\end{lem}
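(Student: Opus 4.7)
The plan is to hold the processing times of $I$ fixed, relabel the jobs so that the assumed worst-case WSPT order is $1,2,\dotsc,n$, and regard only the weights $w=(w_1,\dotsc,w_n)$ as variable; they are confined to the compact polytope
\begin{equation*}
P \;=\; \Bigl\{w \in \mathbb{R}^n_{\ge 0} : w_1/p_1 \ge w_2/p_2 \ge \cdots \ge w_n/p_n,\ \textstyle\sum_j w_j = 1\Bigr\}.
\end{equation*}
On all of $P$ the WSPT list is $1,2,\dotsc,n$, so the WSPT schedule itself does not change and $\mathrm{WSPT}_\alpha(w)$ is linear in $w$. The optimum $\mathrm{OPT}_\alpha(w) = \min_S \sum_j w_j C_j^S(\alpha)$ is concave as a minimum of linear functionals, so for every $c > 0$ the set $\{w\in P : \mathrm{WSPT}_\alpha(w) \le c\cdot \mathrm{OPT}_\alpha(w)\}$ is convex. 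Hence $\lambda_\alpha$ is quasi-convex on $P$ and attains its maximum at a vertex of $P$.

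Next I would characterise the vertices of $P$. Because $w_j = 0$ together with $w_j/p_j \ge w_{j+1}/p_{j+1}$ forces $w_{j+1}=0$, every vertex consists of a prefix $w_1,\dotsc,w_q$ of positive weights followed by $w_{q+1}=\cdots=w_n=0$; a short count of linearly independent active inequalities then shows that the positive-weight jobs must all share one common Smith ratio $w_1/p_1 = \cdots = w_q/p_q$. Let $w^*$ be such a maximising vertex and let $I'$ be the instance obtained by discarding the zero-weight jobs $q+1,\dotsc,n$ and rescaling the remaining weights so that the common ratio equals $1$; this yields $w_j = p_j$ on all remaining jobs while keeping the number of machines unchanged. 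Both modifications preserve $\lambda_\alpha$: positive homogeneity in $w$ handles the rescaling, and the zero-weight jobs sit at the tail of the WSPT list (so they neither influence the $\alpha$-points of the preceding jobs nor contribute to the WSPT objective) and in any optimal schedule may be relegated to the end of every machine without changing the objective. Thus $\lambda_\alpha(I') = \lambda_\alpha(w^*) \ge \lambda_\alpha(I)$.

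The step I expect to demand the most care is the vertex characterisation together with the verification that shedding the zero-weight tail does not decrease $\lambda_\alpha$: the linear-independence bookkeeping when positive and zero blocks coexist, and the argument that an optimal schedule may always be chosen with the zero-weight jobs placed last on every machine, are the delicate parts of the argument. A minor degeneracy arises for $\alpha = 0$ when $n \le m$, where $\mathrm{OPT}_0 = 0$ and $\lambda_0$ is not well-defined; this trivial case is easily handled separately.
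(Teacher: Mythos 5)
Your proof is correct and is essentially the paper's own argument in geometric clothing: the vertices of your polytope $P$ are exactly the paper's prefix instances $I'_k$ (the first $k$ jobs with $w_j=p_j$, suitably normalized), your convex-combination coefficients are the paper's $\rho_k-\rho_{k+1}$ from the Abel summation $w_j=\sum_{k\ge j}(\rho_k-\rho_{k+1})p_j$, and your quasi-convexity step plays the role of the paper's mediant inequality $\sum_k a_k/\sum_k b_k\le\max_k a_k/b_k$. The paper simply carries out this decomposition explicitly rather than invoking vertex optimality of a quasi-convex function.
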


\begin{proof}
 The proof proceeds in the same way as Schwiegelshohn's proof \cite{Sch11}. Assume that $w_1/p_1 \ge w_2/p_j \ge \cdots \ge w_n/p_n$ and that the jobs are scheduled in this order in the WSPT schedule for $I$. Then define the instances $I'_k$, $k=1,\dotsc,n$ consisting of $k$ jobs with $w_j' = p_j' \coloneqq p_j$. Then for every $k \in \{1,\dotsc,n\}$ it holds that $C_j^{\mathrm{WSPT}}(\alpha)(I'_k) = C_j^{\mathrm{WSPT}}(\alpha)(I)$ for every job $j \in \{1,\dotsc,k\}$. Therefore, if we set $\rho_{n+1} \coloneqq 0$, we get
 \begin{align*}
  \mathrm{WSPT}_\alpha(I) &= \sum_{j=1}^n w_j C_j^{\mathrm{WSPT}}(\alpha)(I) = \sum_{j=1}^n \rho_j p_j C_j^{\mathrm{WSPT}}(\alpha)(I) \\ 
  &= \sum_{j=1}^n \left(\sum_{k=j}^n (\rho_k-\rho_{k+1})\right) p_j C_j^{\mathrm{WSPT}}(\alpha)(I) \\ 
  &= \sum_{k=1}^n (\rho_k-\rho_{k+1}) \sum_{j=1}^k p_j C_j^{\mathrm{WSPT}}(\alpha)(I) \\
  &= \sum_{k=1}^n (\rho_k-\rho_{k+1}) \sum_{j=1}^k p_j C_j^{\mathrm{WSPT}}(\alpha)(I'_k) = \sum_{k=1}^n (\rho_k-\rho_{k+1}) \mathrm{WSPT}_\alpha(I_k').
 \end{align*}
 On the other hand, for every $k \in \{1,\dotsc,n\}$, scheduling every job $j \in \{1,\dotsc,k\}$ as in the optimal schedule for $I$ is feasible for $I'_k$. Hence, we can bound the optimal objective value for $I_k$ by $\mathrm{OPT}_\alpha(I_k') \le \sum_{j=1}^k p_j C_j^*(\alpha)(I)$. Therefore,
 \begin{align*}
  \mathrm{OPT}_\alpha(I) &= \sum_{j=1}^n w_j C_j^*(\alpha)(I) = \sum_{j=1}^n \rho_j p_j C_j^*(\alpha)(I) \\
  &= \sum_{j=1}^n \left(\sum_{k=j}^n (\rho_k-\rho_{k+1})\right) p_j C_j^*(\alpha)(I) \\
  &= \sum_{k=1}^n (\rho_k - \rho_{k+1}) \sum_{j=1}^k p_j C_j^*(\alpha)(I) \ge \sum_{k=1}^n (\rho_k-\rho_{k+1}) \mathrm{OPT}_\alpha(I'_k)
 \end{align*}
 because $\rho_{k}-\rho_{k+1} \ge 0$ for all $k \in \{1,\dotsc,n\}$. Hence,
 \begin{multline*}
  \lambda_\alpha(I) = \frac{\mathrm{OPT}_\alpha(I)}{\mathrm{WSPT}_\alpha(I)} \le \frac{\sum_{k=1}^n (\rho_k-\rho_{k+1}) \mathrm{WSPT}_\alpha(I_k')}{\sum_{k=1}^n (\rho_k-\rho_{k+1}) \mathrm{OPT}_\alpha(I'_k)} \\ \le \max_{k \in \{1,\dotsc,n\}} \frac{\mathrm{WSPT}_\alpha(I'_k)}{\mathrm{OPT}_\alpha(I'_k)} = \max_{k \in \{1,\dotsc,m\}} \lambda_\alpha(I'_k). \qedhere\end{multline*}
\end{proof}

For unit Smith ratio instances the WSPT rule is nothing but list scheduling according to an arbitrary given order. Restricting to them has the benefit that the objective value of a schedule $\mathrm S$ can be computed easily from its machine loads, namely \begin{equation}\sum_{j=1}^n p_j C_j^{\mathrm S}(\tfrac 1 2) = \frac 1 2 \sum_{i=1}^m (M_i^{\mathrm S})^2.\label{eq:relation to machine loads}\end{equation} This classical observation can for example be found in the paper of Eastman et al.~\cite{EEI64}.

For the sum of the squares of the machine loads as objective function Avidor, Azar, and Sgall~\cite{AAS01} showed that WSPT has performance ratio $4/3$. So this also also holds for the weighted sum of $\tfrac 1 2$-points. By plugging it in into Theorem~\ref{Performance WSEPT based on alpha-points}, we get the following corollary. 

\begin{corollary} \label{orange performance guarantee}
 The WSEPT rule has performance ratio $1+\frac 1 6 \max\{2,1+\Delta\}$ for the problem~$P|\bm p_j \sim \mathrm{stoch}|\E[\sum w_j \bm C_j]$.
\end{corollary}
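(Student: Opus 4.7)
The plan is to combine the Avidor--Azar--Sgall bound with Theorem~\ref{Performance WSEPT based on alpha-points} at $\alpha = \tfrac{1}{2}$. The essential observation is that evaluating a schedule by the weighted sum of $\tfrac{1}{2}$-points is, on unit Smith ratio instances, the same (up to a factor $\tfrac{1}{2}$) as evaluating it by the sum of squares of machine loads via identity~\eqref{eq:relation to machine loads}. So I first want a performance ratio for the WSPT rule on $P||\sum w_j C_j(\tfrac{1}{2})$, and then feed it into Theorem~\ref{Performance WSEPT based on alpha-points}.

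First I would apply Lemma~\ref{worst case unit ratio} with $\alpha = \tfrac{1}{2}$ to restrict the analysis of WSPT's performance ratio to unit Smith ratio instances, on which the WSPT rule is just list scheduling in an arbitrary given order. On such an instance~$I'$, identity~\eqref{eq:relation to machine loads} yields
\[
 \sum_{j=1}^n p_j C_j^{\mathrm S}(\tfrac12) \;=\; \tfrac12 \sum_{i=1}^m (M_i^{\mathrm S})^2
\]
for every schedule $\mathrm S$, so the ratio $\mathrm{WSPT}_{1/2}(I')/\mathrm{OPT}_{1/2}(I')$ coincides with the ratio of the sum of squared machine loads produced by list scheduling to the minimum possible sum of squared machine loads. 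Avidor, Azar, and Sgall~\cite{AAS01} proved that this ratio is at most $4/3$. Combining with Lemma~\ref{worst case unit ratio}, the WSPT rule therefore has performance ratio $1+\beta = 4/3$ for $P||\sum w_j C_j(\tfrac{1}{2})$, that is, $\beta = 1/3$.

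Next I would feed $\alpha = \tfrac{1}{2}$ and $\beta = 1/3$ into the second bound of Theorem~\ref{Performance WSEPT based on alpha-points}. This yields
\[
 1 + \beta \cdot \max\{1,\alpha(1+\Delta)\}
 \;=\; 1 + \tfrac{1}{3}\max\bigl\{1,\tfrac{1}{2}(1+\Delta)\bigr\}
 \;=\; 1 + \tfrac{1}{6}\max\{2,1+\Delta\},
\]
which is exactly the bound claimed in the corollary.

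There is no real obstacle here beyond correctly invoking the earlier pieces: Lemma~\ref{worst case unit ratio} reduces to unit Smith ratios, identity~\eqref{eq:relation to machine loads} converts the $\tfrac{1}{2}$-point objective into squared machine loads, the result of~\cite{AAS01} supplies the $4/3$ bound, and Theorem~\ref{Performance WSEPT based on alpha-points} transports it to the stochastic setting. The only thing worth being careful about is the conversion between the two objectives used by~\cite{AAS01} and the weighted $\tfrac{1}{2}$-point objective, but since \eqref{eq:relation to machine loads} is an exact equality (with the common factor $\tfrac{1}{2}$ cancelling in the ratio), the $4/3$ bound transfers without loss.
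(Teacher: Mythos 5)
Your proposal is correct and follows essentially the same route as the paper: reduce to unit Smith ratio instances via Lemma~\ref{worst case unit ratio}, use identity~\eqref{eq:relation to machine loads} to translate the $\tfrac12$-point objective into the sum of squared machine loads, invoke the $4/3$ bound of Avidor, Azar, and Sgall, and plug $\alpha=\tfrac12$, $\beta=\tfrac13$ into Theorem~\ref{Performance WSEPT based on alpha-points}. The arithmetic $1+\tfrac13\max\{1,\tfrac12(1+\Delta)\}=1+\tfrac16\max\{2,1+\Delta\}$ is also correct.
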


Now we generalize the bound of Eastman, Even, and Isaacs~\cite{EEI64}.

\begin{theorem}[Generalized Eastman-Even-Isaacs bound] \label{EEI bounds}
 For every $\alpha \in (0,1]$ the WSPT rule has performance ratio 
\[1+\frac{m-1}{2\alpha m} \le 1 + \frac{1}{2\alpha}\]
 for the problem $P||\sum w_j C_j(\alpha)$.
\end{theorem}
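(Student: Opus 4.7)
The plan is to reduce to unit Smith ratio instances via Lemma~\ref{worst case unit ratio} and then combine two complementary lower bounds on the optimum. For a unit Smith ratio instance ($w_j = p_j$) the WSPT rule is just list scheduling in an arbitrary order, and extending the identity~\eqref{eq:relation to machine loads} from $\alpha = \tfrac12$ to general $\alpha$ gives, for every schedule $\mathrm S$,
\begin{equation*}
 \sum_{j=1}^n p_j C_j^{\mathrm S}(\alpha) = \tfrac12 \sum_{i=1}^m (M_i^{\mathrm S})^2 + \bigl(\alpha - \tfrac12\bigr) Q,
\end{equation*}
where $Q \coloneqq \sum_j p_j^2$. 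This reduces the task to comparing sums of squared machine loads.

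For the WSPT side I would use the standard list-scheduling inequality $S_j^{\mathrm{WSPT}} \le \tfrac{1}{m}\sum_{k<j} p_k$ (each job lands on the least loaded machine). Summing against $p_j$ and setting $P \coloneqq \sum_j p_j$ yields $\sum p_j S_j^{\mathrm{WSPT}} \le \tfrac{1}{2m}(P^2 - Q)$, which together with $\sum p_j C_j^{\mathrm{WSPT}}(\alpha) = \sum p_j S_j^{\mathrm{WSPT}} + \alpha Q$ gives
\begin{equation*}
 \mathrm{WSPT}_\alpha \le \frac{P^2}{2m} + \bigl(\alpha - \tfrac{1}{2m}\bigr) Q.
\end{equation*}
For the optimum I would apply two separate lower bounds on $\sum_i (M_i^*)^2$: Cauchy--Schwarz gives $\sum_i (M_i^*)^2 \ge P^2/m$, hence $\mathrm{OPT}_\alpha \ge L_1 \coloneqq \tfrac{P^2}{2m} + (\alpha - \tfrac12) Q$; and $(\sum_{j \in i} p_j)^2 \ge \sum_{j \in i} p_j^2$ for nonnegative terms yields $\sum_i (M_i^*)^2 \ge Q$, hence $\mathrm{OPT}_\alpha \ge L_2 \coloneqq \alpha Q$.

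A direct calculation then shows that the WSPT upper bound equals $L_1 + \mu L_2$ for $\mu \coloneqq \tfrac{m-1}{2\alpha m}$: indeed, the WSPT bound minus $L_1$ equals $\bigl(\tfrac12 - \tfrac{1}{2m}\bigr) Q = \mu \cdot \alpha Q = \mu L_2$. Since $\mathrm{OPT}_\alpha$ dominates both $L_1$ and $L_2$ and $\mu \ge 0$, we obtain $\mathrm{WSPT}_\alpha \le L_1 + \mu L_2 \le (1+\mu)\,\mathrm{OPT}_\alpha$, giving performance ratio $1 + \tfrac{m-1}{2\alpha m} \le 1 + \tfrac{1}{2\alpha}$.

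The only real subtlety is the combination step: neither lower bound alone suffices (for $\alpha < \tfrac12$ the quantity $L_1$ can even be negative, while $L_2$ alone is too weak), and it is the specific nonnegative combination $L_1 + \mu L_2$ that happens to reproduce the WSPT bound exactly. As a sanity check, at $\alpha = 1$ the argument specialises to the classical Eastman--Even--Isaacs bound $1 + \tfrac{m-1}{2m}$.
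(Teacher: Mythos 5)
Your proposal is correct and is essentially the paper's own argument in different notation: the paper routes the same three ingredients (the list-scheduling bound $\sum_j p_j S_j^{\mathrm{WSPT}} \le \tfrac1m\sum_j p_j\sum_{k<j}p_k$, the convexity bound $\sum_i (M_i^*)^2 \ge P^2/m$ via the single-machine instance $I_1$, and $\mathrm{OPT}_\alpha \ge \alpha\sum_j p_j^2$) through one chain of inequalities, which is exactly your decomposition $\mathrm{WSPT}_\alpha \le L_1 + \mu L_2$ with $\mu = \tfrac{m-1}{2\alpha m}$. No gaps.
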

\begin{proof}
 By Lemma~\ref{worst case unit ratio} we can restrict to the case that $w_j = p_j$ for all jobs $j$. Let $I_1$ be the instance consisting of the same jobs but only one machine. On this instance the WSPT schedule is optimal.
 Moreover, the following two inequalities hold.
 \begin{align}
  \sum_{j=1}^n p_j S_j^{\mathrm{WSPT}}(I) &\le \frac 1 m \sum_{j=1}^n p_j S_j^{\mathrm{WSPT}}(I_1) = \frac 1 m \sum_{j=1}^n p_j S_j^*(I_1), \label{eq:bound start time list scheduling}\\
  \sum_{j-1}^n p_j C_j^*(\tfrac 1 2)(I) &\ge \frac 1 m \sum_{j=1}^n p_j C_j^*(\tfrac 1 2)(I_1) \label{eq:classical EEI bound}.
 \end{align}
 The first holds because a WSPT schedule is a list schedule and the starting time of any job~$j$ is at most the average machine load caused by all jobs preceding $j$ in the list. The second inequality, due to Eastman, Even, and Isaacs~\cite{EEI64}, follows from Equation~\eqref{eq:relation to machine loads} and the convexity of the square function. Putting these two inequalities together yields
 \begin{align*}
  &\sum_{j=1}^n p_j C_j^{\mathrm{WSPT}}(\alpha)(I) = \sum_{j=1}^n p_j S_j^{\mathrm{WSPT}}(I) + \alpha \sum_{j=1}^n p_j^2 \stackrel{\eqref{eq:bound start time list scheduling}}\le \frac 1 m \sum_{j=1}^n p_j S_j^{*}(I_1) + \alpha \sum_{j=1}^n p_j^2 \\
  =\ &\frac 1 m \sum_{j=1}^n p_j C_j^{*}(\tfrac 1 2)(I_1) + \Bigl(\alpha - \frac{1}{2m}\Bigr) \sum_{j=1}^n p_j^2 \stackrel{\eqref{eq:classical EEI bound}}\le \sum_{j=1}^n p_j C_j^*(\tfrac 1 2)(I) + \Bigl(\alpha - \frac{1}{2m}\Bigr) \sum_{j=1}^n p_j^2 \\
  =\ &\sum_{j=1}^n p_j C_j^*(\alpha)(I) + \frac{m-1}{2m} \sum_{j=1}^n p_j^2 \le \sum_{j=1}^n p_j C_j^*(\alpha)(I) + \frac{m-1}{2\alpha m} \sum_{j=1}^n p_j C_j^*(\alpha)(I) \\
  =\ &\left(1+\frac{m-1}{2\alpha m}\right) \sum_{j=1}^n p_j C_j^*(\alpha)(I). \qedhere
 \end{align*}
\end{proof}

\begin{remark}
 The generalized Eastman-Even-Isaacs bound does not lead to better performance ratios for the WSEPT rule for $P|\bm p_j \sim \mathrm{stoch}|\E[\sum w_j \bm C_j]$ than the bound of Möhring et al.~\cite{MSU99}, as plugging in $\beta = \frac{m-1}{2\alpha m}$ into Theorem~\ref{Performance WSEPT based on alpha-points} leads to a performance ratio of
 \[1 +  \frac{m-1}{2\alpha m} \cdot \max\{1,\alpha(1+\Delta)\} \ge 1 + \frac 1 2 (1+\Delta)\Bigl(1-\frac 1 m\Bigr).\]
\end{remark}

So far, by choosing $\alpha = 1$ and $\alpha = \tfrac 1 2$ we have derived the two performance ratios for the WSEPT rule labeled by [Cor.~\ref{red performance guarantee}] and [Cor.~\ref{orange performance guarantee}] in Figure~\ref{fig:performance}.
These are better than those following from Theorem~\ref{EEI bounds}. Besides, the proofs of Schwiegelshohn~\cite{Sch11} and of Avidor et al.~\cite{AAS01} of the underlying bounds for WSPT are quite similar. Both consist of a sequence of steps that reduce the set of instances to be examined. In every such reduction step it is shown that for any instance $I$ of the currently considered set there is an instance $I'$ in a smaller set for which the approximation ratio of WSPT is not better. This can be generalized to arbitrary $\alpha \in [\frac 1 2, 1]$. The resulting performance ratios for WSPT lead by means of Theorem~\ref{Performance WSEPT based on alpha-points} to a family of different performance ratios for the WSEPT rule. Note that the performance ratio of WSEPT following from the result of Avidor et al.\ for $\alpha= \tfrac 1 2$ has better behavior for large $\Delta$, while the performance ratio following from Kawaguchi and Kyan's result for $\alpha = 1$ is better for small $\Delta$. This behavior generalizes to $\alpha \in [\tfrac 1 2, 1]$: the smaller the underlying $\alpha$, the better the ratio for large $\Delta$ but the worse the ratio for small $\Delta$. Finally, we take for every $\Delta > 0$ the minimum of all the derived bounds.

\begin{theorem}[Generalized Kawaguchi-Kyan bound] \label{KK bounds}
 For every $\alpha \in [\frac 1 2, 1]$ the WSPT rule has performance ratio
 \[1+\frac{1}{2\alpha + \sqrt{8\alpha}}\]
 for $P||\sum w_j C_j(\alpha)$, and this bound is tight.
\end{theorem}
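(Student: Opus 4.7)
The plan is to adapt Schwiegelshohn's streamlined proof~\cite{Sch11} of the Kawaguchi-Kyan bound to the $\alpha$-point objective, following the same chain of worst-case-preserving reductions but with the parameter $\alpha \in [\tfrac12, 1]$ carried through all the way to the final optimization.

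By Lemma~\ref{worst case unit ratio}, it suffices to restrict attention to unit Smith ratio instances, where $w_j = p_j$ for all jobs, so that WSPT reduces to list scheduling in an arbitrary prescribed order. For such instances a direct computation, based on the single-machine identity $\sum_{j \in J_i} p_j C_j = \tfrac12(M_i^2 + \sum_{j \in J_i} p_j^2)$, yields $\sum_j p_j C_j^{\mathrm S}(\alpha) = \tfrac12\sum_i (M_i^{\mathrm S})^2 + (\alpha - \tfrac12)\sum_j p_j^2$ for every schedule $\mathrm S$. Thus only the machine loads and the schedule-independent quantity $\sum_j p_j^2$ enter the objective, and the coefficient of the latter is nonnegative precisely when $\alpha \ge \tfrac12$.

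I would then carry out the Schwiegelshohn-style reductions in turn, verifying in each case that the ratio $\lambda_\alpha$ does not decrease: (i)~replacing many small jobs by infinitesimal ``dust''; (ii)~adding dust to bring every machine load in the WSPT schedule up to a common value $M$; (iii)~replacing all non-dust jobs by $k$ identical copies of a ``big'' job of size $a \le M$; and (iv)~passing to the limit $m \to \infty$ with $k/m$ tending to a suitable constant. After these reductions the instance is parametrized by $t \coloneqq a/M$ together with the asymptotic density of big jobs, and both $\mathrm{WSPT}_\alpha$ and $\mathrm{OPT}_\alpha$ admit closed-form expressions so that $\lambda_\alpha$ becomes an explicit rational function of these two parameters, with $\alpha$ entering only through the coefficient of $\sum_j p_j^2 = k a^2$.

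Maximizing this function over the two parameters reduces to solving a quadratic whose positive root produces the $\sqrt{8\alpha}$ appearing in the bound, and substituting back gives $\lambda_\alpha \le 1 + \frac{1}{2\alpha + \sqrt{8\alpha}}$. The main obstacle is expected to be step~(iii): Schwiegelshohn's equal-big-job reduction rests on convexity of the squared machine-load, and for the $\alpha$-point objective the correction term $(\alpha - \tfrac12)\sum_j p_j^2$ has to be handled compatibly with the convexity direction, which is exactly what restricts the argument to $\alpha \ge \tfrac12$. Tightness finally follows by plugging the optimal values of $t$ and $k/m$ back into the canonical family of worst-case instances and verifying that $\lambda_\alpha$ converges to $1 + \frac{1}{2\alpha + \sqrt{8\alpha}}$ along them.
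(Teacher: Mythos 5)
Your plan is essentially the paper's proof: reduce to unit Smith ratios via Lemma~\ref{worst case unit ratio}, run Schwiegelshohn's chain of ratio-preserving reductions with $\alpha$ carried through via the identity $\sum_j p_j C_j^{\mathrm S}(\alpha)=\tfrac12\sum_i (M_i^{\mathrm S})^2+(\alpha-\tfrac12)\sum_j p_j^2$, and maximize the resulting two-parameter rational function (the paper's $\lambda_{\mathrm L}(s,x)$, after checking that the medium-job case is dominated) to obtain $1+1/(2\alpha+\sqrt{8\alpha})$ together with the tight instances having $1/(2+\sqrt{2\alpha})\,m$ long jobs of length $1+\sqrt{2/\alpha}$. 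The one point where your expectation differs from the executed proof is the location of the $\alpha\ge\tfrac12$ restriction: it is needed in the job-splitting step of Lemma~\ref{Schwiegelshohn2}, where splitting a job decreases both objectives by multiples of $(2\alpha-1)p_{(j,1)}'p_{(j,2)}'$, not in the equal-length reduction of Lemma~\ref{non-short jobs same length}, which is arranged to preserve $\sum_j p_j^2$ exactly --- though this is consistent with your own observation that the sign of $(\alpha-\tfrac12)$ is the crux.
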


Combining this bound with Theorem~\ref{Performance WSEPT based on alpha-points} yields for every $\alpha \in [\frac 1 2, 1]$ the performance ratio $1+\tfrac 1 2 \max\{1/(\alpha+\sqrt{2\alpha}),(1+\Delta)/(1+\sqrt{2/\alpha})\}$ of WSEPT for $P|\bm p_j \sim \mathrm{stoch}|\E[\sum w_j \bm C_j]$. This is minimized at $\alpha \coloneqq 1/\min\{2,1+\Delta\}$, yielding the following performance ratio (see Figure~\ref{fig:performance}).

\begin{corollary} \label{green performance guarantee}
 For $P|\bm p_j \sim \mathrm{stoch}|\E[\sum w_j \bm C_j]$ the WSEPT rule has performance ratio
 \[1 + \frac 1 2 \cdot \frac{1}{1+\min\{2,\sqrt{2(1+\Delta)}\}} \cdot (1+\Delta).\]
\end{corollary}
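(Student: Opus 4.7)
The plan is to combine Theorem~\ref{KK bounds} with the second conclusion of Theorem~\ref{Performance WSEPT based on alpha-points}, and then optimize over $\alpha \in [\tfrac 1 2, 1]$. Observing that $\sqrt{8\alpha} = 2\sqrt{2\alpha}$, the generalized Kawaguchi-Kyan bound rewrites as $\beta = \frac{1}{2(\alpha + \sqrt{2\alpha})}$. Plugging this value of $\beta$ into Theorem~\ref{Performance WSEPT based on alpha-points} yields, for every $\alpha \in [\tfrac 1 2, 1]$, the WSEPT performance ratio
\[
 g(\alpha) \;\coloneqq\; 1 + \frac{\max\{1,\alpha(1+\Delta)\}}{2(\alpha + \sqrt{2\alpha})}
\]
for $P|\bm p_j \sim \mathrm{stoch}|\E[\sum w_j \bm C_j]$. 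Minimizing $g$ over the interval $[\tfrac 1 2, 1]$ then gives the corollary.

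To carry out the minimization I would use the identity $\frac{\alpha}{\alpha + \sqrt{2\alpha}} = \frac{1}{1 + \sqrt{2/\alpha}}$ to rewrite the two branches of the $\max$. In the \emph{inactive} regime $\alpha(1+\Delta) \le 1$ the function is $g(\alpha) = 1 + \frac{1}{2(\alpha+\sqrt{2\alpha})}$, which is strictly decreasing in $\alpha$. In the \emph{active} regime $\alpha(1+\Delta) \ge 1$ it becomes $g(\alpha) = 1 + \frac{1+\Delta}{2(1 + \sqrt{2/\alpha})}$, which is strictly increasing in $\alpha$. The two branches agree at the crossing point $\alpha = 1/(1+\Delta)$, so $g$ is continuous on $[\tfrac 1 2, 1]$ and unimodal with a unique minimum located at this crossing point whenever it falls inside the admissible interval.

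It remains to split by cases. If $\Delta \le 1$, the crossing point $1/(1+\Delta) \in [\tfrac 1 2, 1]$ is admissible and attains the minimum; if $\Delta > 1$, the entire interval $[\tfrac 1 2, 1]$ lies strictly in the active regime, on which $g$ is increasing, so the minimum is attained at the left endpoint $\alpha = \tfrac 1 2$. In both cases the optimizer is
\[
 \alpha^* \;=\; \frac{1}{\min\{2,\,1+\Delta\}},
 \qquad \text{and hence} \qquad
 \sqrt{2/\alpha^*} \;=\; \min\{2,\,\sqrt{2(1+\Delta)}\}.
\]
Substituting $\alpha^*$ into the active-regime formula (which is tight at $\alpha^*$ in either case) produces the claimed bound $1 + \tfrac12 \cdot \tfrac{1+\Delta}{1 + \min\{2,\sqrt{2(1+\Delta)}\}}$.

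The only non-mechanical ingredient is the monotonicity analysis of the two branches and the verification that they match continuously at $\alpha = 1/(1+\Delta)$; these are short calculus checks. The case split and the final substitution are then pure bookkeeping, and I do not anticipate any deeper obstacle.
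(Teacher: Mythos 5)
Your proposal is correct and follows exactly the paper's route: combine the generalized Kawaguchi--Kyan bound $\beta=\frac{1}{2\alpha+\sqrt{8\alpha}}$ of Theorem~\ref{KK bounds} with the second conclusion of Theorem~\ref{Performance WSEPT based on alpha-points} to get $1+\tfrac12\max\{1/(\alpha+\sqrt{2\alpha}),(1+\Delta)/(1+\sqrt{2/\alpha})\}$, and minimize over $\alpha\in[\tfrac12,1]$ at $\alpha^*=1/\min\{2,1+\Delta\}$. Your monotonicity and case analysis simply fills in the optimization step that the paper states without detail.
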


\paragraph{Proof of Theorem~\ref{KK bounds}}
The proof of Theorem~\ref{KK bounds} is analogous to the proof of Schwiegelshohn~\cite{Sch11}, consisting of a sequence of lemmas that reduce the set of instances to consider until a worst-case instance is determined. From now on, let $\alpha \in [\tfrac 1 2, 1]$. Assuming that $p_1 \ge \cdots \ge p_n$, let \[\ell \coloneqq \max \left\{j \in \{1,\dotsc,m\} \,\middle|\, p_j \ge \frac{1}{m-j+1}\sum_{j'=j}^n p_{j'}\right\}.\] Then we call the $\ell$ jobs with largest processing times \emph{long} jobs and denote the set of long jobs by $\cL$. %The following lemma can implicitly be found in \cite{Sch11}.

\begin{lem} \label{Schwiegelshohn2}
 For every instance $I$ of $P||\sum p_j C_j(\alpha)$ and every $\varepsilon > 0$ there is an instance $I' = I'(\varepsilon)$ of $P||\sum p_j C_j(\alpha)$ with the same number of machines such that $\lambda_\alpha(I') \ge \lambda_\alpha(I)$ and 
 \begin{enumerate}[(i)]
  \item $M_{\mathrm{min}}^{\mathrm{WSPT}}(I') = 1$,
  \item every job~$j$ with $S_j^{\mathrm{WSPT}}(I') < M_{\mathrm{min}}^{\mathrm{WSPT}}(I')$ fulfills $C_j^{\mathrm{WSPT}}(I') \le M_{\mathrm{min}}^{\mathrm{WSPT}}(I')$ and $p_j' < \varepsilon$,
  \item in the optimal schedule for~$I'$ every machine either is used only by a single long job or has load $M_{\mathrm{min}}^*(I')$.
 \end{enumerate}
\end{lem}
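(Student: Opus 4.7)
My plan is to construct $I'$ from $I$ by three successive modifications, each of which preserves or increases $\lambda_\alpha$. Throughout, the central tool I would use is the identity
\begin{equation*}
\sum_{j=1}^n p_j C_j^{\mathrm S}(\alpha) \;=\; \tfrac12 \sum_{i=1}^m (M_i^{\mathrm S})^2 + \bigl(\alpha - \tfrac12\bigr) \sum_{j=1}^n p_j^2,
\end{equation*}
which extends~\eqref{eq:relation to machine loads} to every $\alpha \in [\tfrac12,1]$ and separates the schedule-dependent part (sum of squared machine loads) from an order-independent ``granularity penalty'' $(\alpha - \tfrac12)\sum p_j^2 \ge 0$.

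For~(i), I would rescale every processing time by $1/M_{\mathrm{min}}^{\mathrm{WSPT}}(I)$; both $\mathrm{WSPT}_\alpha$ and $\mathrm{OPT}_\alpha$ then scale by the square of the common factor, so $\lambda_\alpha$ is unchanged and the minimum WSPT load becomes~$1$. For~(ii), the idea is to replace each non-long job~$j$ by $k$ equal copies of size $p_j/k$ and then exploit the tie-breaking freedom among the (now all equal) Smith ratios to choose the worst-case WSPT list order: list the tiny copies first and the long jobs in decreasing size last. For $k$ large enough, the tiny copies list-schedule into a nearly perfectly balanced fill of all $m$ machines up to a common level, and a further rescaling fixes this common level to~$1$; the $\ell$ long jobs then land one per machine on top, so $M_{\mathrm{min}}^{\mathrm{WSPT}}(I') = 1$ is attained by the $m-\ell$ long-free machines. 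By construction, every non-long piece completes at or before~$1$ with size~$<\varepsilon$, while every long job starts exactly at $M_{\mathrm{min}}^{\mathrm{WSPT}}(I')$ and thus falls outside the hypothesis of~(ii).

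For~(iii), the identity turns minimization of $\mathrm{OPT}_\alpha(I')$ into minimization of $\sum_i M_i^2$ over feasible assignments. Because every non-long job in $I'$ has arbitrarily small size, the total non-long load is effectively divisible; a standard exchange argument on the convex function $x \mapsto x^2$ then forces any two machines not dedicated to a single long job to carry the same load $M_{\mathrm{min}}^*(I')$, for otherwise a tiny transfer between them would strictly decrease the objective. That it is optimal to isolate each long job on its own machine, rather than share it with non-long load, follows from the defining inequality $p_\ell(m-\ell) \ge \sum_{j \notin \cL}p_j$ via a first-order optimality check.

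The main obstacle will be Step~(ii), namely verifying $\lambda_\alpha(I') \ge \lambda_\alpha(I)$ under the combined splitting and reordering. Using the identity, this reduces to a careful comparison of how $\sum_i M_i^2$ changes in the WSPT schedule (where the adversarial order is designed to push it up) and how it changes in the optimum (where finer granularity can only decrease it), together with the fact that the granularity penalty shrinks by the same amount in numerator and denominator; controlling the resulting ratio will require some bookkeeping.
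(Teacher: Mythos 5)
Your overall architecture (scale, split, balance) is the right one, and the identity $\sum_j p_j C_j^{\mathrm S}(\alpha)=\frac12\sum_i (M_i^{\mathrm S})^2+(\alpha-\frac12)\sum_j p_j^2$ is correct and useful. But the step you yourself flag as the main obstacle is not mere bookkeeping: the transformation you propose for (ii) is false as stated. You split \emph{every non-long} job into tiny copies and re-order the list to put all tiny copies first; this also splits \emph{medium} jobs, i.e.\ jobs that start at or after $M_{\mathrm{min}}^{\mathrm{WSPT}}$ in the WSPT schedule but are not long, and absorbing their mass into the initial balanced fill can strictly \emph{decrease} the ratio. Concretely, take $m=2$, $\alpha=\frac12$, one job $B$ with $p_B=1.4$ and tiny jobs of total length $2$ listed before $B$. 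Since $1.4<\frac12(1.4+2)=1.7$, $B$ is not long, so your rule splits it. For the original instance WSPT has loads $(2.4,1)$ and the optimum has loads $(1.7,1.7)$, giving $\lambda_{1/2}(I)=6.76/5.78\approx1.17$, and $I$ already satisfies (i)--(iii); after your transformation everything is tiny, WSPT is perfectly balanced, and $\lambda_{1/2}(I')=1<\lambda_{1/2}(I)$. The requirement $\lambda_\alpha(I')\ge\lambda_\alpha(I)$ therefore fails.

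The paper's proof avoids this in two ways that your plan misses. First, it splits \emph{only} jobs with $S_j^{\mathrm{WSPT}}<M_{\mathrm{min}}^{\mathrm{WSPT}}$; medium and long jobs are left untouched. Second, and crucially, each split is performed \emph{in place}: the two pieces are inserted into the list so that WSPT schedules them back-to-back in exactly the time slot the original job occupied, so the WSPT schedule's geometry and all machine loads are unchanged and the objective drops by exactly $\delta=(2\alpha-1)p_{(j,1)}'p_{(j,2)}'\ge0$ (this is where $\alpha\ge\frac12$ enters), while performing the same replacement in the optimal schedule shows $\mathrm{OPT}_\alpha$ drops by at least $\delta$; since $\lambda_\alpha\ge1$, the ratio $(\mathrm{WSPT}_\alpha-\delta)/(\mathrm{OPT}_\alpha-\delta)$ cannot decrease. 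Your global re-ordering (``tiny copies first, long jobs last'') forfeits this exact comparison: it changes which machine each job lands on and when, so neither the change in $\mathrm{WSPT}_\alpha$ nor its relation to the change in $\mathrm{OPT}_\alpha$ is controlled. Your treatment of (i) by scaling and of (iii) by exchange arguments on $x\mapsto x^2$ matches the paper, but the heart of the lemma is the in-place splitting argument, which needs to be restored.
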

Like in Schwiegelshohn's paper, the lemma is proven by scaling the instance and splitting all jobs with $S_j^{\mathrm{WSPT}} < M_{\mathrm{min}}^{\mathrm{WSPT}}$ until they satisfy the conditions.
\begin{proof}
 As in the proof by Schwiegelshohn~\cite{Sch11}, the reduction relies on the observation that a job~$j$ with $S_j^{\mathrm{WSPT}} < M_{\mathrm{min}}^{\mathrm{WSPT}}$ can be replaced by two jobs $(j,1)$ and $(j,2)$ with $p_{(j,1)}'+p_{(j,2)}'=p_j$ and $S_j^{\mathrm{WSPT}} + p_{(j,1)}' \le M_{\mathrm{min}}^{\mathrm{WSPT}}$ in such a way that the WSPT rule schedules the new jobs one after the other on the same machine and during the same time slot as the old job and that the approximation ratio does not decrease. (Thereto the position of the second job in the input list must be chosen appropriately.) The reason that the approximation ratio can only increase is that the transformation reduces the objective value of the WSPT schedule by exactly \[\alpha p_j^2 - (\alpha p_{(j,1)}^2 + p_{(j,2)}'(p_{(j,1)}'+\alpha p_{(j,2)})) = (2\alpha -1) p_{(j,1)}'p_{(j,2)}' =: \delta \ge 0,\] and the objective value of the optimal schedule is reduced by at least this amount. (Replacing $j$ in the optimal schedule by the new jobs gives a feasible schedule). So
 \[\frac{\mathrm{WSPT}_\alpha - \delta}{\mathrm{OPT}_\alpha-\delta} = \lambda_\alpha + \frac{\delta}{\mathrm{OPT}_\alpha-\delta}(\lambda_\alpha - 1) \ge \lambda_\alpha\] because $\delta < \mathrm{OPT}_\alpha$ and $\lambda_\alpha \ge 1$.
 
 Such job splitting is applied to jobs with $S_j^{\mathrm{WSPT}} < M_{\mathrm{min}}^{\mathrm{WSPT}}$ until the conditions of the lemma are satisfied: First, every job jutting out over $M_{\mathrm{min}}^{\mathrm{WSPT}}$ in the WSPT schedule is split so that the first part ends at $M_{\mathrm{min}}^{\mathrm{WSPT}}$, then the jobs are split in such a way that they can be evenly distributed onto the machines without a long job in the optimal schedule, and finally, they are split until they are smaller than $M_{\mathrm{min}}^{\mathrm{WSPT}}(I) \cdot \varepsilon$. After the splitting, the whole instance is scaled by $M_{\mathrm{min}}^{\mathrm{WSPT}}(I)$ in order to fulfill the first condition.
\end{proof}
Note that the restriction to $\alpha \ge \tfrac 1 2$ is needed for this lemma because for smaller $\alpha$ splitting jobs increases the objective value and can thence reduce the performance ratio.

From now on, we focus on instances $I$ that fulfill the requirements of Lemma~\ref{Schwiegelshohn2} for some $0 < \varepsilon < M_{\mathrm{min}}^{\mathrm{WSPT}}(I)$. For a subset $\mathcal J \subseteq \mathcal N$ of jobs we write\linebreak $p(\mathcal J) \coloneqq \sum_{j \in \mathcal J} p_j$. We call the jobs in the set \[\cS \coloneqq \{j \in \{1,\dotsc,n\} \mid S_j^{\mathrm{WSPT}}(I) < M_{\mathrm{min}}^{\mathrm{WSPT}}\}\] \emph{short} jobs. This set is disjoint from $\cL$ because all jobs~$j \in \cS$ have processing time $p_j < \varepsilon$, while $p_j \ge p_\ell \ge \frac{1}{m-\ell+1}\sum_{j'=\ell}^n p_{j'} \ge M_{\mathrm{min}}^{\mathrm{WSPT}} > \varepsilon$ for all jobs~$j \in \cL$. Finally, we call the jobs in \[\cM \coloneqq \cN(I) \setminus (\cS \cup \cL)\] \emph{medium} jobs. In the optimal schedule for an instance $I$ of the type of Lemma~\ref{Schwiegelshohn2}, every machine~$i$ that does not process only a single long job has load\linebreak $M_i^*(I) = M_{\mathrm{min}}^{*}(I) = p(\cM \cup \cS)/(m-|\cL|)$. While some of them may process a medium job together with some short jobs, the rest only process short jobs (see Figure~\ref{Figure1}).

\begin{figure}
 \centering
 \begin{tikzpicture}[xscale=2,yscale=0.3]
  \filldraw[fill=lightgray] (0,11) rectangle +(2.4,1);
  \filldraw[fill=lightgray] (0,10) rectangle +(2.1,1);
  \filldraw[fill=lightgray] (0,9) rectangle +(1.8,1);
  \filldraw[fill=lightgray] (0,8) rectangle +(1.5,1);
  \filldraw[fill=lightgray] (0,7) rectangle +(1.2,1);
  \filldraw[fill=lightgray] (0,6) rectangle +(0.9,1);
  \foreach \j in {45,...,51} {
   \draw (\j/30,8) rectangle +(1/30,1);
  }
  \foreach \j in {36,...,51} {
   \draw (\j/30,7) rectangle +(1/30,1);
  }
  \foreach \j in {27,...,51} {
   \draw (\j/30,6) rectangle +(1/30,1);
  }
  \foreach \i in {0,...,5} {
   \foreach \j in {0,...,51} {
    \draw (\j/30,\i) rectangle +(1/30,1);
   }
  }
 \end{tikzpicture}
 \quad
 \begin{tikzpicture}[xscale=2,yscale=0.3,baseline]
  \filldraw[fill=lightgray] (1,11) rectangle +(2.4,1);
  \filldraw[fill=lightgray] (1,10) rectangle +(2.1,1);
  \filldraw[fill=lightgray] (1,9) rectangle +(1.8,1);
  \filldraw[fill=lightgray] (1,8) rectangle +(1.5,1);
  \filldraw[fill=lightgray] (1,7) rectangle +(1.2,1);
  \filldraw[fill=lightgray] (1,6) rectangle +(0.9,1);
  \foreach \i in {0,...,11} {
   \foreach \j in {0,...,29} {
    \draw (\j/30,\i) rectangle +(1/30,1);
   }
  }
  \draw[->] (0,-1) -- (3.4,-1);
  \draw (1,-0.5) -- (1,-1.5) node[below] {$t=1$};
 \end{tikzpicture}
 \caption{Optimal schedule and WSPT schedule for instance satisfying the conditions of Lemma~\ref{Schwiegelshohn2}} \label{Figure1}
\end{figure}
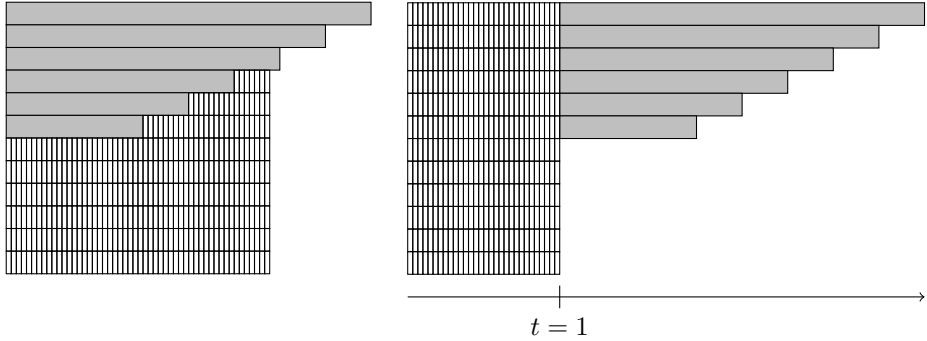

\begin{lem} \label{non-short jobs same length}
 For every instance $I$ of $P||\sum p_j C_j(\alpha)$ satisfying the conditions of Lemma~\ref{Schwiegelshohn2} there is an instance $I'$ with $\lambda_\alpha(I') \ge \lambda_\alpha(I)$ that still satisfies the conditions of Lemma~\ref{Schwiegelshohn2} and has the additional property that all non-short jobs require equal processing time.
\end{lem}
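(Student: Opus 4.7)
My plan is to adapt Schwiegelshohn's proof for $\alpha = 1$ to general $\alpha \in [\tfrac{1}{2}, 1]$. Fix an instance $I$ satisfying the conditions of Lemma~\ref{Schwiegelshohn2}, with non-short processing times $q_1 \geq q_2 \geq \dots \geq q_k$. I aim to construct an instance $I'$ with the same short jobs and with all non-short jobs of equal size, such that $I'$ again satisfies Lemma~\ref{Schwiegelshohn2} and $\lambda_\alpha(I') \geq \lambda_\alpha(I)$. This is done by iteratively averaging an adjacent pair $q_a, q_{a+1}$, first within the set of long jobs $\cL$, then within the set of medium jobs $\cM$, and finally merging the two resulting common values.

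Structural preservation is almost automatic at each step. The short jobs, and hence $M_{\mathrm{min}}^{\mathrm{WSPT}} = 1$, are unchanged. As $M_{\mathrm{min}}^* = p(\cM \cup \cS)/(m-\ell)$ depends only on $p(\cM)$ and $p(\cS)$, an averaging step within $\cL$ or within $\cM$ preserves $M_{\mathrm{min}}^*$ as well as $p(\cL)$. Short filler may need to be redistributed between the medium-job machines and the short-only machines; this is possible since we may further split the short jobs using the procedure from the proof of Lemma~\ref{Schwiegelshohn2}, which does not decrease $\lambda_\alpha$ thanks to $\alpha \geq \tfrac{1}{2}$. Choosing adjacent pairs keeps the LPT order on non-short jobs stable, so the WSPT assignment to machines is also preserved.

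The main computation is the ratio comparison. With the combinatorial structure fixed, both $\mathrm{WSPT}_\alpha$ and $\mathrm{OPT}_\alpha$ decompose, up to the constant contribution of the short jobs, as $\sum_i q_i S_i + \alpha \sum_i q_i^2$, where $S_i$ denotes the start time of non-short job $i$ on its assigned machine in the respective schedule. Averaging two non-short jobs $q_a, q_b$ to their mean $\bar q$ changes $\sum_i q_i^2$ by $-\tfrac{1}{2}(q_a - q_b)^2$, the same in both schedules, so the $\alpha \sum_i q_i^2$ contributions cancel out in the comparison. In the WSPT/LPT schedule the larger of the two jobs sits earlier, so averaging strictly increases $\sum_i q_i S_i^{\mathrm{WSPT}}$; in the optimal schedule long jobs have $S_j^* = 0$ irrespective of their size and medium jobs can be scheduled before or after their short filler. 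A direct calculation (using $\alpha \geq \tfrac{1}{2}$ to sign the cross terms) shows that the net change in $\mathrm{OPT}_\alpha$ is at most the change in $\mathrm{WSPT}_\alpha$, yielding $\lambda_\alpha(I') \geq \lambda_\alpha(I)$.

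I expect the hardest part to be the final merging step. After the within-$\cL$ and within-$\cM$ equalizations the long jobs have common length $q_L$ and the medium jobs have common length $q_M \leq q_L$. Averaging across this threshold shifts the cutoff $\ell$, changes $p(\cL)$ and $p(\cM)$, and hence alters $M_{\mathrm{min}}^*$, so the earlier bookkeeping no longer applies directly. I would handle this by a continuous deformation: under the constraint $\ell q_L + (k - \ell) q_M = Q$, trace $\lambda_\alpha$ as a piecewise-smooth function of one free parameter and show it is monotone toward $q_L = q_M$, exploiting the quadratic form of the $\alpha$-point objective and re-invoking the splitting reduction of Lemma~\ref{Schwiegelshohn2} at each structural transition (change of $\ell$, change of the LPT order) in order to maintain the hypotheses throughout the deformation.
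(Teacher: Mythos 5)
Your proposal has a genuine gap: the pairwise, mass-preserving averaging goes in the wrong direction for medium jobs. After the reduction of Lemma~\ref{Schwiegelshohn2} a non-short job starts at time $M_{\mathrm{min}}^{\mathrm{WSPT}}=1$ in the WSPT schedule, so (contrary to your claim) $\sum_i q_iS_i^{\mathrm{WSPT}}$ is unchanged by averaging, and replacing $q_a,q_b$ by their mean changes $\mathrm{WSPT}_\alpha$ by $-\tfrac{\alpha}{2}(q_a-q_b)^2$. For two \emph{long} jobs the optimal value changes by the same amount (each sits alone on a machine starting at $0$), and the ratio indeed does not decrease. But for two \emph{medium} jobs the optimal machine loads are unaffected once the short filler is rebalanced, so by~\eqref{eq:relation to machine loads} the optimal value changes only through the $(\alpha-\tfrac12)\sum_j p_j^2$ term, i.e.\ by $-\tfrac{1}{2}(\alpha-\tfrac12)(q_a-q_b)^2$. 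Thus $\mathrm{WSPT}_\alpha$ drops by strictly more than $\mathrm{OPT}_\alpha$; the condition for the ratio not to decrease would be $\lambda_\alpha\ge\alpha/(\alpha-\tfrac12)\ge 2$, which never holds, so the ratio strictly \emph{decreases} whenever $q_a\neq q_b$ (for $\alpha=\tfrac12$ this is plainest: $\mathrm{OPT}_{1/2}$ is unchanged while $\mathrm{WSPT}_{1/2}$ decreases). Tellingly, Lemma~\ref{one medium job} of the paper moves medium jobs in exactly the opposite, anti-averaged direction.

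This is not a repairable bookkeeping detail but a structural obstruction. A worst-case-preserving equalization must conserve \emph{both} $p(\cL\cup\cM)$ and $\sum_{j\in\cL\cup\cM}p_j^2$, which forces the number of equal non-short jobs to be $t=p(\cL\cup\cM)^2/\sum_{j\in\cL\cup\cM}p_j^2$ --- generally not an integer. The paper therefore replicates the instance $q$ times ($q$ the denominator of $t$) and places $qt$ equal non-short jobs on $qm$ machines; the lemma deliberately does \emph{not} promise to preserve the number of machines, and Section~\ref{Performance guarantee WSPT} states explicitly that this cannot be done for fixed $m$ (which is why Lemmas~\ref{one medium job} and~\ref{long jobs equal} still leave one medium job of deviating length, eliminated only by the optimization in Lemma~\ref{lem:maximum}). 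Your final ``continuous deformation'' across the long/medium boundary, which you leave unproven, is exactly where this obstruction would resurface even if the within-class steps were corrected.
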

\begin{proof}
 The proof is an adapted version of the proof of Corollary 5 in the paper of Schwiegelshohn~\cite{Sch11}. We assume that processing times are rational. Let $A \coloneqq \sum_{j \in \mathcal L \cup \mathcal M} p_j^2$ and $t \coloneqq p(\cL \cup \cM)^2/A \in \Q$. Let $q \in \N$ be the denominator of $t$ (as reduced fraction). The instance $I'$ is defined as follows: The number $m'$ of machines is set to $qm$. Furthermore, there are $qt \in \N$ non-short jobs of size $p' \coloneqq A/p(\cL \cup \cM)$. Finally, for every short job of $I$ there are $q$ short jobs in $I'$. We assume that these can be distributed evenly onto all $qm$ machines and additionally, together with the medium jobs, they can be distributed in a balanced manner onto the machines that do not process a long job. (If this is not the case, we split the short jobs appropriately beforehand.) Then the conditions of Lemma~\ref{Schwiegelshohn2} remain valid and all non-short jobs have the same size. Let $\cS'$ be the set of short jobs in $I'$ and let $\mathcal{NS}'$ be the set of non-short jobs in $I'$. First notice that 
 \begin{equation*}
  \begin{array}{rcccl}
   \displaystyle \sum_{j \in \mathcal{NS}'} p_j' &=& qt \cdot p' &=& \displaystyle q\sum_{j \in \cL \cup \cM} p_j \quad \text{and} \\
   \displaystyle \sum_{j \in \mathcal{NS}'} (p_j')^2 &=& qt \cdot (p')^2 &=& \displaystyle q\sum_{j \in \cL \cup \cM} p_j^2.
  \end{array}
 \end{equation*}
 Therefore, the objective value of the WSPT schedule is only scaled by $q$ in the course of this transformation:
 \begin{align*}
  \mathrm{WSPT}_\alpha(I') &= \sum_{j \in \cS'} p_j' C_j^{\mathrm{WSPT}}(\alpha)(I') + \sum_{j \in \mathcal{NS}'} p_j'C_j^{\mathrm{WSPT}}(\alpha)(I') \\
  &= \sum_{j \in \cS'} p_j' C_j^{\mathrm{WSPT}}(\alpha)(I') + \sum_{j \in \mathcal{NS}'} p_j' + \alpha \sum_{j \in \mathcal{NS}'} (p_j')^2 \\
  &= q \left(\sum_{j \in \cS} p_j C_j^{\mathrm{WSPT}}(\alpha)(I) + \sum_{j \in \cL \cup \cM} p_j + \alpha \sum_{j \in \cL\cup\cM} p_j^2\right) \\ &= q \left(\sum_{j \in \cS} p_j C_j^{\mathrm{WSPT}}(\alpha)(I) + \sum_{j \in \cL \cup \cM} p_j C_j^{\mathrm{WSPT}}(\alpha)(I) \right) \\
  &= q \cdot \mathrm{WSPT}_\alpha(I).
 \end{align*}
 
 It remains to be shown that the objective value of an optimal schedule for $I'$ is at most $q$ times the optimal objective value for $I$. This follows from the following calculation, where the inequality marked with $(*)$ is proven below.
 \begin{align*}\mathrm{OPT}_\alpha(I') &= \sum_{j \in \cN(I')} p_j' C_j^{*}(\alpha)(I') = \sum_{j \in \cN(I')} p_j' C_j^{*}(\tfrac 1 2)(I') + \Bigl(\alpha-\frac 1 2\Bigr) \sum_{j \in \cN(I')} (p_j')^2 \\
  &\stackrel{\eqref{eq:relation to machine loads}}= \frac 1 2 \sum_{i=1}^{m'} M_i^{*}(I')^2 + \Bigl(\alpha - \frac 1 2\Bigr) \left(\sum_{j \in \cS'} (p_j')^2 + \sum_{j \in \mathcal{NS}'} (p_j')^2\right) \\
  &\stackrel{(*)}\le \frac 1 2 q \sum_{i=1}^m M_i^*(I)^2 + \Bigl(\alpha - \frac 1 2\Bigr) \left(q \sum_{j \in \cS} p_j^2 + q\sum_{j \in \cL \cup \cM} p_j^2\right) \\
  &\stackrel{\eqref{eq:relation to machine loads}}= q\left(\sum_{j \in \cN(I)} p_j C_j^*(\tfrac 1 2)(I) + \Bigl(\alpha - \frac 1 2\Bigr) \sum_{j \in \cN(I)} p_j^2\right) \\
  &= q \sum_{j \in \cN(I)} p_j C_j^*(\alpha)(I) = q \cdot \mathrm{OPT}_\alpha(I).
 \end{align*}
 In order to prove the inequality $(*)$, we have to show that
 \[\sum_{i=1}^{m'} M_i^*(I')^2 \le q \sum_{i=1}^m M_i^*(I)^2.\]
 
  If $p' \le 1+qtp'/m' = 1+tp'/m$, i.e., if the new jobs are medium jobs, then all machines have load $1+tp'/m$ in the optimal schedule for $I'$. Therefore,
  \begin{align*}
   \sum_{i=1}^{m'} M_i^{*}(I')^2 = m'(1+tp'/m)^2 &= qm \left(\frac{p(\cS) + p(\cL \cup \cM)}{m}\right)^2 \\ &= qm \left(\frac{\sum_{i=1}^m M_i^*(I)}{m}\right)^2 \le q \sum_{i=1}^m M_i^*(I)^2,
  \end{align*}
  where the last inequality follows from the convexity of the square function.

 In the other case, when the non-short jobs are long, there are $qt$ machines with load~$p'$ and $q(m-t)$ machines with load~$m/(m-t)$ in the optimal schedule for $I'$. We define $k \coloneqq p(\cM)/M_{\mathrm{min}}^*(I)$ and
 \[\tilde A \coloneqq \sum_{j \in \cL} p_j^2 + k \cdot M_{\mathrm{min}}^*(I)^2, \qquad \tilde t \coloneqq \frac{p(\cL \cup \cM)^2}{\tilde A}, \qquad \tilde p \coloneqq \frac{\tilde A}{p(\cL \cup \cM)}.\]
 The convexity of the square function implies that
 \begin{equation}
  \tilde t = (|\cL|+k) \cdot \frac{\left((|\cL|+k) \cdot \Bigl(\sum_{j \in \cL} p_j + k \cdot M_{\mathrm{min}}^*(I)\Bigr)/(|\cL|+k)\right)^2}{\sum_{j \in \cL} p_j^2 + k \cdot M_{\mathrm{min}}^*(I)^2} \le |\cL| + k \label{eq:bound tilde t}
 \end{equation}
 and
 \[k \cdot M_{\mathrm{min}}^*(I) = k \cdot \left(\frac{p(\cM)}{k}\right)^2 = \frac{p(\cM)^2}{k} \ge \frac{p(\cM)^2}{|\cM|} = |\cM| \left(\frac{p(\cM)}{|\cM|}\right)^2 \ge \sum_{j \in \cM} p_j^2,\]
 whence $\tilde A \ge A$, $\tilde t \le t$, and $\tilde p \ge p'$.
 Therefore,
 \[\frac{m}{m-\tilde t} \le \frac{m}{m-t} < p' \le \tilde p,\]
 where the inequality in the middle holds because by assumption $p' > 1+tp'/m$. Since additionally, $t \cdot p' + (m-t) \cdot \frac{m}{m-t} = p(\cN(I)) = \tilde t \cdot \tilde p + (m-\tilde t) \cdot \frac{m}{m-\tilde t}$, the convexity of the square function implies that
 \begin{equation}
  t \cdot (p')^2 + (m-t) \cdot \left(\frac{m}{m-t}\right)^2 \le \tilde t \cdot \tilde p^2 + (m-\tilde t) \cdot \left(\frac{m}{m-\tilde t}\right)^2. \label{eq:inequality Schwiegelshohn}
 \end{equation}
 For an illustration of this formula, see Schwiegelshohn~\cite{Sch11}. The following calculation concludes the proof.
 \begin{align*}
  \sum_{i=1}^{m'} M_i^*(I')^2 &= qt(p')^2 + q(m-t)\left(\frac{m}{m-t}\right)^2 = q\left(t(p')^2 + (m-t)\left(\frac{m}{m-t}\right)^2\right) \\ 
  &\stackrel{\eqref{eq:inequality Schwiegelshohn}}\le q\left(\tilde t \tilde p^2 + (m-\tilde t) \left(\frac{m}{m-\tilde t}\right)^2\right) \\
  &= q\left(\tilde t \tilde p^2 + \frac{m^2}{m-\tilde t}\right) \stackrel{\eqref{eq:bound tilde t}}\le q\left(\tilde t \tilde p^2 + \frac{m^2}{m-|\cL|-k} \right) \\
  &= q\left(\tilde A + (m-|\cL|-k) \left(\frac{m}{m-|\cL|-k}\right)^2\right) \\
  &= q\left(\sum_{j \in \cL} p_j^2 + k \cdot M_{\mathrm{min}}^*(I)^2 + (m-|\cL|-k) \cdot M_{\mathrm{min}}^*(I)^2\right) \\
  &= q\cdot \sum_{i=1}^m M_i^*(I)^2. \qedhere
 \end{align*}
\end{proof}

Since by Lemma~\ref{Schwiegelshohn2} reducing $\varepsilon$ can only increase the approximation ratio, the worst-case approximation ratio is approached in the limit $\varepsilon \to 0$, which we will subsequently further investigate. In the limit the sum of the squared processing times of the short jobs is negligible, wherefore the limits for $\varepsilon \to 0$ of the objective values of the WSPT schedule and the optimal schedule for an instance $I(\varepsilon)$ of the type of Lemma~\ref{non-short jobs same length} only depend on two variables: the ratio $s$ between the numbers of non-short jobs and machines and the duration $x$ of the non-short jobs. The limit of the objective value of the WSPT schedule is given by
\[\lim_{\varepsilon \to 0} \mathrm{WSPT}_\alpha(I(\varepsilon)) = \frac m 2 + smx(1+\alpha x).\]
For the optimal schedule the formula depends on whether the non-short jobs are medium or long. In the first case it is given by
\[\lim_{\varepsilon \to 0} \mathrm{OPT}_\alpha(I(\varepsilon)) = \frac m 2 (sx+1)^2 + \Bigl(\alpha - \frac 1 2\Bigr) smx^2.\]
and in the second case by
\[\lim_{\varepsilon \to 0} \mathrm{OPT}_\alpha(I(\varepsilon)) = \alpha s m x^2 + \frac{m}{2(1-s)}.\]
So we have to determine the maximum of the function
\[\lambda_\mathrm M(s,x) \coloneqq \frac{\frac m 2 + smx(1+\alpha x)}{\frac m 2 (sx+1)^2 + (\alpha - \frac 1 2) sm x^2} = \frac{2 s x (\alpha x + 1) + 1}{s^2 x^2 + sx((2\alpha - 1)x+2)+1}\] on $\{(s,x) \mid 0 \le s < 1, \ 0 \le x \le 1/(1-s)\}$ and the maximum of 
\[\lambda_\mathrm L(s,x) \coloneqq \frac{\frac m 2 + smx(1+\alpha x)}{\alpha s m x^2 + \frac{m}{2(1-s)}} = \frac{(1-s)(2sx(\alpha x+1)+1)}{2\alpha s(1-s)x^2 + 1}\] on the region $\{(s,x) \mid 0 \le s < 1,\ 1/(1-s) \le x\}$.

The partial derivative $\frac{\partial}{\partial x}\lambda_\mathrm M$ is positive on the feasible region, so for every fixed $s$ the maximum of $\lambda_\mathrm M(s,\cdot)$ is attained at $x=\frac{1}{1-s}$, corresponding to the case that the non-short jobs are long. This case is also captured by the function~$\lambda_\mathrm L$.

For $x \to \infty$ the function $\lambda_\mathrm L$ converges to one. Hence, for every $s$ the maximum of $\lambda_\mathrm L(s,\cdot)$ must be attained at a finite point $x$. The partial derivative $\frac{\partial}{\partial x}\lambda_\mathrm L$ has only one positive root, namely \[x_s \coloneqq (\alpha s + \sqrt{(2(1-s)+\alpha s)\alpha s})/(2\alpha s (1-s)) > 1/(1-s).\] By plugging this in, we obtain
\[\lambda_\mathrm L(s,x_s) = 1 + \tfrac 1 2 (\sqrt{(2(1-s)+\alpha s)\alpha s}/\alpha-s).\]
The only root of the derivative of the function $s \mapsto \lambda_\mathrm L(s,x_s)$ that is less than 1 is \[s \coloneqq 1/(2+\sqrt{2\alpha}).\] Plugging this in yields the worst-case performance ratio
\[1+\frac{1}{2\alpha+\sqrt{8\alpha}}.\]

Like the proofs of Kawaguchi and Kyan~\cite{KK86} and of Avidor et al.~\cite{AAS01}, this proof shows how the worst-case instances look like: They consist of short jobs of total length $m$ and $1/(2+\sqrt{2\alpha})m$ long jobs of length $1+\sqrt{2/\alpha}$. For $\alpha \in \{1/2,1\}$ we recover the worst case instances of Avidor et al.\ and of Kawaguchi and Kyan.

\section{Performance ratio of the WSPT rule for a fixed number of machines} \label{Performance guarantee WSPT}

In this section we analyze the WSPT rule for the problem $P||\sum w_j C_j$ with a fixed number $m$ of machines. The problem instances of Kawaguchi and Kyan~\cite{KK86} whose approximation ratios converge to $(1+\sqrt 2)/2$ consist of a set of infinitesimally short jobs with total processing time $m$, and a set of $k$ jobs of length $1+\sqrt 2$, where $k/m \to 1-\sqrt 2/2$. Since $1-\sqrt 2/2$ is irrational, the worst case ratio can only be approached if the number of machines goes to infinity. Rounding these instances for a fixed $m$ by choosing $k$ as the nearest integer to $\bigl(1-\frac{\sqrt 2}{2}\bigr)m$ (in the following denoted by~$\bigl\lfloor\bigl(1-\frac{\sqrt 2}{2}\bigr)m\bigr\rceil$) yields at least a lower bound on the worst-case approximation ratio for $P||\sum w_j C_j$. It is, however, possible that for a particular $m$ there is an instance for which the WSPT rule has a larger approximation ratio than for the rounded instance of Kawaguchi and Kyan. 
As we will see, the worst-case instances for any fixed $m$ actually look almost as the rounded Kawaguchi-Kyan instances with the only difference that the length of the long jobs depends as a function on $m$.

\begin{theorem} \label{theorem:Performance guarantee WSPT}
For $P||\sum w_j C_j$ the WSPT rule has performance ratio 
\[1+\frac 1 2 \frac{\sqrt{(2m-k_m)k_m}-k_m}{m},\qquad\text{where}\quad k_m \coloneqq \left\lfloor \Bigl(1-\frac{\sqrt 2}{2}\Bigr)m \right\rceil.\]
Moreover, this bound is tight for every fixed $m \in \N$.
\end{theorem}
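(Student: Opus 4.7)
The plan is to refine the analysis leading to Theorem~\ref{KK bounds} (with $\alpha=1$) so as to respect a fixed number of machines $m$ throughout. I would first apply Lemma~\ref{worst case unit ratio} with $\alpha=1$ to reduce to unit Smith ratio instances, and then Lemma~\ref{Schwiegelshohn2} with $\alpha=1$ to normalize the instance so that (i)~the least-loaded WSPT machine has load one, (ii)~the short jobs fit within the first time unit and have individual size below an arbitrarily small $\varepsilon>0$, and (iii)~each machine in the optimal schedule either runs a single long job or achieves the minimum load $M^*_{\min}$. Both reductions preserve $m$, which is essential here.

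The key step, replacing Lemma~\ref{non-short jobs same length}, is to show \emph{without altering $m$} that, in the limit $\varepsilon\to 0$, the supremal approximation ratio over such instances is realized by configurations consisting of short jobs of total length~$m$ together with $k$ long jobs of a common length~$x$, for some $k\in\{0,1,\dotsc,m-1\}$. The rescaling to a larger machine set in Lemma~\ref{non-short jobs same length} was only used to achieve integrality after cloning; for a fixed~$m$, I would verify the equal-length property directly, by writing the limit ratio as a smooth function of the non-short job lengths $(y_1,\dotsc,y_k)$ subject to $\sum y_i$ being fixed and exploiting the convexity/concavity structure (the short jobs' contribution is linear in the $y_i$, the sum-of-squared-loads contribution to $\mathrm{OPT}$ is convex, and $\mathrm{WSPT}$ involves $\sum y_i^2$) to push the extremum onto the diagonal $y_1=\dotsb=y_k$. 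I expect this equalization step to be the main technical obstacle.

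Once reduced to an instance parameterized by $s=k/m$ and $x$, the computation from the proof of Theorem~\ref{KK bounds} applies verbatim: the ratio is $\lambda_\mathrm L(s,x)$, maximized over $x>1/(1-s)$ at $x_s=\bigl(s+\sqrt{(2-s)s}\bigr)/\bigl(2s(1-s)\bigr)$, giving
\[\lambda_\mathrm L(k/m,x_{k/m})\;=\;1+\tfrac12\bigl(\sqrt{(2-k/m)(k/m)}-k/m\bigr)\;=\;1+\frac{1}{2m}\bigl(\sqrt{(2m-k)k}-k\bigr).\]
It then remains to maximize $f(k):=\sqrt{(2m-k)k}-k$ over integers $k\in\{0,\dotsc,m-1\}$. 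A short computation yields $f''(k)=-m^2\bigl((2m-k)k\bigr)^{-3/2}<0$, so $f$ is strictly concave and the integer optimum lies at $\lfloor k^*\rfloor$ or $\lceil k^*\rceil$, where $k^*=m(1-\sqrt{2}/2)$ is the unique continuous maximizer. Using the identity
\[f(n+1)-f(n)\;=\;\frac{2m-2n-1}{\sqrt{(2m-n-1)(n+1)}+\sqrt{(2m-n)n}}-1,\]
a direct comparison shows that the integer maximum is attained precisely at $k_m=\lfloor(1-\sqrt{2}/2)m\rceil$.

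Tightness for each~$m$ is witnessed by the explicit family of instances consisting of $k_m$ long jobs of length $x_{k_m/m}$ together with $N$ identical short jobs of total length~$m$ and unit Smith ratio, whose approximation ratio converges to $1+\frac{1}{2m}\bigl(\sqrt{(2m-k_m)k_m}-k_m\bigr)$ as $N\to\infty$, matching the upper bound.
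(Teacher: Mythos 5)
Your skeleton matches the paper's: reduce via Lemmas~\ref{worst case unit ratio} and~\ref{Schwiegelshohn2} (both $m$-preserving), pin down the structure of the non-short jobs, then optimize over $k$ and $x$; your closed-form maximization of $\lambda_{\mathrm L}(k/m,\cdot)$ and the concavity computation for $k\mapsto\sqrt{(2m-k)k}-k$ are correct and agree with the paper's Lemma~\ref{lem:maximum}. The genuine gap is exactly the step you flag: the equalization of the non-short jobs cannot be obtained by ``pushing the extremum onto the diagonal'' via convexity. A non-short job contributes to $\mathrm{OPT}$ differently according to whether it is long (its own machine) or medium (packed with short jobs to load $M^*_{\mathrm{min}}$), and the correct perturbation goes in \emph{opposite} directions in the two cases. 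For medium jobs, equalizing lengths at fixed total leaves the optimal machine loads unchanged, so it decreases $\mathrm{WSPT}$ by $\delta$ (the decrease of $\sum_j p_j^2$) but decreases $\mathrm{OPT}$ by only $\delta/2$; since $\lambda<2$ this \emph{lowers} the ratio, so the worst case sits at a vertex of the simplex, not on the diagonal. The paper therefore replaces Lemma~\ref{non-short jobs same length} by two $m$-preserving lemmas with opposite perturbations: Lemma~\ref{one medium job} \emph{anti-equalizes} the medium jobs into blocks of size $M^*_{\mathrm{min}}$ plus one remainder (valid because $\lambda<2$), and Lemma~\ref{long jobs equal} equalizes the long jobs (valid because $\lambda\ge 1$, as there both objectives drop by the same $\delta$). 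A residual medium job of length $y$ then survives into the final optimization, and one must additionally show $y^*=0$ (the paper does this via quasi-convexity of $y\mapsto\lambda_m(x,y,k)$); your two-parameter family $(k,x)$ silently assumes this.

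A second, smaller gap: concavity of $f(k)=\sqrt{(2m-k)k}-k$ only places the integer maximizer at $\lfloor k^*\rfloor$ or $\lceil k^*\rceil$; that the winner is always the \emph{nearest} integer to $k^*=\bigl(1-\tfrac{\sqrt 2}{2}\bigr)m$ is not automatic from concavity, and your ``direct comparison'' is not carried out. The paper proves it by locating the point $u_m=m-\tfrac12\sqrt{2m^2-1}$ satisfying $\lambda_m''\bigl(u_m-\tfrac12\bigr)=\lambda_m''\bigl(u_m+\tfrac12\bigr)$ (note $u_m\neq k^*$) and verifying $k_m\in\bigl[u_m-\tfrac12,u_m+\tfrac12\bigr)$, which uses the irrationality of $\sqrt{2m^2}$. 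Your tightness construction is correct and coincides with the paper's worst-case family.
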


In the remainder we prove this theorem. Lemmas~\ref{worst case unit ratio} and \ref{Schwiegelshohn2} hold in particular for the weighted sum of completion times. Since the described transformations do not change the number of machines, also for a fixed number $m$ of machines the worst case occurs in an instance of the form described in Lemma~\ref{Schwiegelshohn2}. However, we cannot apply Lemma~\ref{non-short jobs same length} when $m$ is fixed because the transformation in this lemma possibly changes the number of machines. As this is not allowed in our setting, we have to find different reductions. We first reduce to instances with at most one medium job and then reduce further to instances where all long jobs have equal length. Similar reductions are also carried out by Kalaitzis, Svensson, and Tarnawski~\cite{KST17}.

\begin{lem} \label{one medium job}
 For every instance $I$ of $P||\sum p_j C_j$ satisfying the conditions of Lemma~\ref{Schwiegelshohn2} there is an instance $I'$ with the same number of machines and\linebreak $\lambda(I') \ge \lambda(I)$ that still satisfies the conditions of Lemma~\ref{Schwiegelshohn2} and has the additional property that there is at most one medium job.
\end{lem}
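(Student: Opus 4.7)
My plan is to reduce the number of medium jobs by one via a mass-transfer argument between two medium jobs; iterating yields an instance with at most one medium job, which is exactly what the lemma asserts.

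Given $I$ with at least two medium jobs, I would pick two of them, say $j_1$ and $j_2$ with processing times $a \le b$, and define a one-parameter family of instances $I_t$ for $t \in [0, t_{\max}]$ in which these two jobs are replaced by jobs of sizes $a - t$ and $b + t$, while every other job and the number of machines $m$ stay fixed. I choose $t_{\max}$ to be the smallest $t$ at which either the smaller job vanishes (equivalently, can be absorbed as a short job via the splitting construction in the proof of Lemma~\ref{Schwiegelshohn2}) or the larger job crosses the threshold and becomes long. At $t = t_{\max}$ the number of medium jobs has strictly dropped, and after reapplying Lemma~\ref{Schwiegelshohn2} to $I_{t_{\max}}$ the structural conditions are restored without decreasing the ratio. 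Thus it suffices to show $\lambda(I_{t_{\max}}) \ge \lambda(I_0)$.

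For the OPT side the calculation is quite clean. The proof of Lemma~\ref{Schwiegelshohn2} ensures that in an optimal schedule for $I_t$ the long jobs occupy their own machines and every other machine has load $M^*_{\min} = p(\cS \cup \cM)/(m - |\cL|)$. Since the set $\cL$ of long jobs, the number of machines, and the total processing time are all invariant along the family for $t \in [0, t_{\max})$, $M^*_{\min}$ is constant and $\sum_i (M_i^*)^2$ does not change. By~\eqref{eq:relation to machine loads}, applied to the unit-Smith-ratio case (so that $\sum_j p_j C_j^* = \tfrac{1}{2}\sum_i(M_i^*)^2 + \tfrac{1}{2}\sum_j p_j^2$), the only $t$-dependence of $\mathrm{OPT}(I_t)$ lies in the contribution of $j_1, j_2$ to $\sum_j p_j^2$, giving the explicit formula $\mathrm{OPT}(I_t) = \mathrm{OPT}(I_0) + t(b-a) + t^2$.

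For the WSPT side, the same identity $\sum_j p_j C_j^{\mathrm{WSPT}} = \tfrac{1}{2}\sum_i(M_i^{\mathrm{WSPT}})^2 + \tfrac{1}{2}\sum_j p_j^2$ applies, but now $\sum_i (M_i^{\mathrm{WSPT}})^2$ also depends on $t$ through the list-scheduling assignment. I would fix an adversarial input ordering for $I_0$ and apply the same ordering to every $I_t$; this gives a lower bound on $\mathrm{WSPT}(I_t)$, under which the two modified jobs occupy definite positions in the list. The argument then reduces to comparing how the load increments $a-t$ and $b+t$ redistribute among the two machines to which $j_1, j_2$ are assigned by list scheduling. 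The main obstacle is precisely this comparison: as $t$ grows, changing the size of $j_1$ can alter which machine is least loaded when later jobs are placed, so the clean quadratic structure may break; to handle this I would split into cases according to whether, in the adversarial $I_0$-schedule, $j_1$ and $j_2$ land on the same or on distinct machines, and in the latter case argue by a local rearrangement that shifting mass from $j_1$ to $j_2$ only widens the load imbalance. This yields $\mathrm{WSPT}(I_t) - \lambda(I_0)\cdot\mathrm{OPT}(I_t) \ge 0$ for all $t \in [0,t_{\max}]$, and in particular at $t = t_{\max}$, which is the inequality we need.
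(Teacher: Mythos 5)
Your overall strategy---keep the total medium processing time fixed, push $\sum_j p_j^2$ up by making the medium jobs more unequal, and win because the optimum absorbs only half of that increase while WSPT absorbs all of it---is the right mechanism, and your OPT-side computation $\mathrm{OPT}(I_t)=\mathrm{OPT}(I_0)+t(b-a)+t^2$ is correct for $t$ below the point where $j_2$ would become long. The genuine gap is on the WSPT side, which you yourself flag as ``the main obstacle'' and then only gesture at with an uncarried-out case analysis about which machine is least loaded and a ``widen the load imbalance'' heuristic. No such analysis is needed, and without resolving this point the proof does not close. The missing observation is structural: for an instance satisfying conditions (i) and (ii) of Lemma~\ref{Schwiegelshohn2}, every machine is filled to load exactly $M_{\mathrm{min}}^{\mathrm{WSPT}}=1$ by short jobs (each machine's short prefix is at most $1$ by (ii) and at least $1$ because a non-short job cannot start before time $1$), and consequently every non-short job starts at exactly time $1$, one per machine (if some machine carried two non-short jobs, all final loads would exceed $1$, contradicting (i)). Hence
\[
\mathrm{WSPT}(I_t)\;=\;\sum_{j\in\cS}p_jC_j^{\mathrm{WSPT}}+\sum_{j\notin\cS}p_j+\sum_{j\notin\cS}p_j^2,
\]
so the dependence of $\mathrm{WSPT}$ on the medium jobs' sizes runs \emph{only} through $\sum_j p_j^2$, with no machine-assignment effect whatsoever. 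This gives $\mathrm{WSPT}(I_t)-\mathrm{WSPT}(I_0)=2t(b-a)+2t^2$, i.e.\ exactly twice the OPT increment, and then $\lambda(I_t)\ge\lambda(I_0)$ follows from $\lambda(I_0)<2$. Note that this factor of $2$ (more precisely, the factor $\lambda(I_0)$) is precisely what your claimed inequality $\mathrm{WSPT}(I_t)-\lambda(I_0)\,\mathrm{OPT}(I_t)\ge 0$ requires; merely showing that the WSPT loads become ``more imbalanced'' would not deliver it.

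For comparison, the paper performs this step in one shot rather than by iterated pairwise transfers: it replaces all of $\cM$ by $\lfloor p(\cM)/M_{\mathrm{min}}^{*}(I)\rfloor$ jobs of size exactly $M_{\mathrm{min}}^{*}(I)$ plus one remainder job. This preserves $p(\cM)$, increases $\sum_j p_j^2$ by some $\delta>0$ by convexity, leaves the WSPT starting times (all equal to $1$) and the optimal machine loads untouched, and concludes $\lambda(I')=(\mathrm{WSPT}(I)+\delta)/(\mathrm{OPT}(I)+\delta/2)\ge\lambda(I)$. If you add the starting-time observation above, your transfer version becomes essentially the same argument, but two boundary issues at $t=t_{\max}$ still need care: when $p_{j_2}$ reaches $M_{\mathrm{min}}^{*}$ the job becomes long, so $|\cL|$ and hence $M_{\mathrm{min}}^{*}$ change and condition (iii) must be re-established (the paper sidesteps this by capping the new jobs at exactly $M_{\mathrm{min}}^{*}$); and exact rebalancing of the short jobs on the non-long machines after the transfer may require splitting short jobs first, as is done in the analogous step of Lemma~\ref{non-short jobs same length}.
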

\begin{proof}
 We replace the set $\cM$ of medium jobs in $I$ by $\lfloor p(\cM)/M_{\mathrm{min}}^{*}(I) \rfloor$ jobs of size $M_{\mathrm{min}}^{*}(I)$ and one job of size $p^\prime \coloneqq p(\cM) - M_{\mathrm{min}}^{*}(I) \lfloor p(\cM)/M_{\mathrm{min}}^{*}(I) \rfloor$ (see Figure~\ref{Figure2}).
In the thus defined instance $I^\prime$ the job of size $p^\prime$ is the only potentially medium job. Moreover, the properties of Lemma~\ref{Schwiegelshohn2} remain true.
 Let $\tilde \cM$ be the set of new jobs in the instance $I^\prime$. Then we have
 \begin{align*}
  \sum_{j \in \cN(I')} p_j' - \sum_{j \in \cN(I)} p_j = \sum_{j \in \tilde \cM} p_j' - \sum_{j \in \cM} p_j &= 0 \quad \text{and} \\
  \sum_{j \in \cN(I')} (p_j')^2 - \sum_{j \in \cN(I)} p_j^2 = \sum_{j \in \tilde \cM} (p_j')^2 - \sum_{j \in \cM} p_j^2 &=: \delta > 0
 \end{align*}
 by the convexity of the square function. 
 Since in the WSPT schedules for both problem instances all jobs in $\cM$ resp.\ $\tilde \cM$ have the same starting time $M_{\mathrm{min}}^{\mathrm{WSPT}} \coloneqq M_{\mathrm{min}}^{\mathrm{WSPT}}(I) = M_{\mathrm{min}}^{\mathrm{WSPT}}(I^\prime)$, and all other jobs remain unchanged, we have
 \begin{align*}
  \sum_{j \in \cN(I')} p_j' S_j^{\mathrm{WSPT}}(I') &= \sum_{j \in \cN(I') \setminus \tilde \cM} p_j' S_j^{\mathrm{WSPT}}(I') + \sum_{j \in \tilde \cM} p_j' S_j^{\mathrm{WSPT}}(I') \\
  &= \sum_{j \in \cN(I') \setminus \tilde \cM} p_j' S_j^{\mathrm{WSPT}}(I') + M_{\mathrm{min}}^{\mathrm{WSPT}} \sum_{j \in \tilde \cM} p_j' \\
  &= \sum_{j \in \cN(I) \setminus \cM} p_j S_j^{\mathrm{WSPT}}(I) + M_{\mathrm{min}}^{\mathrm{WSPT}} \sum_{j \in \cM} p_j \\
  &= \sum_{j \in \cN(I) \setminus \cM} p_j S_j^{\mathrm{WSPT}}(I) + \sum_{j \in \cM} p_j S_j^{\mathrm{WSPT}}(I) \\
  &= \sum_{j \in \cN(I)} p_j S_j^{\mathrm{WSPT}}(I).
 \end{align*}
 Therefore, we get
 \begin{multline*}
  \mathrm{WSPT}(I') = \sum_{j \in \cN(I')} p_j' C_j^{\mathrm{WSPT}}(I') = \sum_{j \in \cN(I')} p_j' S_j^{\mathrm{WSPT}}(I') + \sum_{j \in \cN(I')} (p_j')^2 \\ = \sum_{j \in \cN(I)} p_j S_j^{\mathrm{WSPT}}(I) + \sum_{j \in \cN(I)} p_j^2 + \delta = \sum_{j \in \cN(I)} p_j C_j^{\mathrm{WSPT}}(I) + \delta = \mathrm{WSPT}(I) + \delta.
 \end{multline*}
 After the transformation the schedule that puts every long job on a machine of its own and balances the loads of the remaining machines is still optimal. In this schedule all machines have the same load as in the optimal schedule for $I$. Therefore, we have 
 \begin{multline*} 
  \mathrm{OPT}(I') = \sum_{j \in \cN(I')} p_j' C_j^*(I') \stackrel{\eqref{eq:relation to machine loads}}= \frac 1 2 \left(\sum_{i=1}^m (M_i^*(I'))^2 + \sum_{j \in \cN(I')} (p_j')^2 \right) \\
  = \frac 1 2 \left( \sum_{i=1}^m (M_i^{*}(I))^2 + \sum_{j \in \cN(I)} p_j^2 + \delta\right) \stackrel{\eqref{eq:relation to machine loads}}= \sum_{j \in \cN(I)} p_j C_j^*(I) + \frac 1 2 \delta = \mathrm{OPT}(I) + \delta/2.
 \end{multline*} 
 Together this yields
 \[\lambda(I') = \frac{\mathrm{WSPT}(I')}{\mathrm{OPT}(I')} = \frac{\mathrm{WSPT}(I)+\delta}{\mathrm{OPT}(I)+\delta/2} = \lambda(I) + \frac{\delta}{\mathrm{OPT}(I)+\delta/2} \Bigl(1-\frac{\lambda(I)}{2}\Bigr).\]
 As $\lambda(I) < 2$, the second summand is non-negative, implying that $\lambda(I') \ge \lambda(I)$.
\end{proof}
For the instance shown in Figure~\ref{Figure1} the optimal and the WSPT schedule of the reduced instance are shown in Figure~\ref{Figure2}.
\begin{figure}[t]
 \centering
 \begin{tikzpicture}[xscale=2,yscale=0.3]
  \filldraw[fill=lightgray] (0,11) rectangle +(2.4,1);
  \filldraw[fill=lightgray] (0,10) rectangle +(2.1,1);
  \filldraw[fill=lightgray] (0,9) rectangle +(1.8,1);
  \filldraw[fill=lightgray] (0,8) rectangle +(26/15,1);
  \filldraw[fill=lightgray] (0,7) rectangle +(26/15,1);
  \filldraw[fill=lightgray] (0,6) rectangle +(2/15,1);
  \foreach \j in {4,...,51} {
   \draw (\j/30,6) rectangle +(1/30,1);
  }
  \foreach \i in {0,...,5} {
   \foreach \j in {0,...,51} {
    \draw (\j/30,\i) rectangle +(1/30,1);
   }
  }
 \end{tikzpicture}
 \quad
 \begin{tikzpicture}[xscale=2,yscale=0.3,baseline]
  \filldraw[fill=lightgray] (1,11) rectangle +(2.4,1);
  \filldraw[fill=lightgray] (1,10) rectangle +(2.1,1);
  \filldraw[fill=lightgray] (1,9) rectangle +(1.8,1);
  \filldraw[fill=lightgray] (1,8) rectangle +(26/15,1);
  \filldraw[fill=lightgray] (1,7) rectangle +(26/15,1);
  \filldraw[fill=lightgray] (1,6) rectangle +(2/15,1);
  \foreach \i in {0,...,11} {
   \foreach \j in {0,...,29} {
    \draw (\j/30,\i) rectangle +(1/30,1);
   }
  }
  \draw[->] (0,-1) -- (3.4,-1);
  \draw (1,-0.5) -- (1,-1.5) node[below] {$t=1$};
 \end{tikzpicture}
 \caption{Optimal schedule and WSPT schedule for instance after the transformation of Lemma~\ref{one medium job}} \label{Figure2}
\end{figure}

\begin{lem} \label{long jobs equal}
 For every instance $I$ of $P||\sum p_j C_j$ satisfying the conditions of Lemma~\ref{one medium job} there is an instance $I'$ with the same number of machines and\linebreak $\lambda(I') \ge \lambda(I)$ that still satisfies the conditions of Lemma~\ref{one medium job} where additionally all long jobs have equal processing time.
\end{lem}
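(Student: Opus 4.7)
The plan is to replace the $|\mathcal{L}|$ long jobs of $I$ by $|\mathcal{L}|$ copies of the same length $\bar p \coloneqq p(\mathcal{L})/|\mathcal{L}|$, while leaving the short jobs and the (at most one) medium job untouched; call the resulting instance $I'$. By convexity of the square function, this averaging reduces the sum of squared long-job sizes by exactly
\[\delta \coloneqq \sum_{j \in \mathcal{L}} p_j^2 - |\mathcal{L}|\,\bar p^2 \ge 0,\]
while preserving $p(\mathcal L)$ and all other processing times.

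First I would verify that $I'$ still meets the conditions of Lemma~\ref{one medium job}. Since the short jobs and hence the WSPT-filled region $[0,M_{\mathrm{min}}^{\mathrm{WSPT}}]=[0,1]$ are unchanged, conditions~(i) and~(ii) of Lemma~\ref{Schwiegelshohn2} transfer verbatim. The definition of~$\ell$ yields $p_\ell \ge M_{\mathrm{min}}^{*}(I) = 1$, hence $\bar p \ge 1$; this makes the assignment "each long job alone on a dedicated machine, the remaining $m-|\mathcal{L}|$ machines balanced at load~$1$ with the medium and short jobs" feasible, and a standard balancing argument for $\sum_i M_i^2$ (using $\bar p \ge 1$) shows that it is still optimal for~$I'$. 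The medium job is preserved, so condition~(iii) of Lemma~\ref{Schwiegelshohn2} and the single-medium-job requirement of Lemma~\ref{one medium job} also carry over.

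Next I would compute the change in both objective values. In both instances every non-short job starts at time~$1$ on its own machine in the WSPT schedule, so the only effect on the WSPT value comes from the squared long-job sizes:
\[\mathrm{WSPT}(I') - \mathrm{WSPT}(I) = |\mathcal{L}|\,\bar p^2 - \sum_{j \in \mathcal{L}} p_j^2 = -\delta.\]
Using the identity $\sum_j p_j C_j^* = \tfrac 1 2 \sum_i (M_i^*)^2 + \tfrac 1 2 \sum_j p_j^2$ from~\eqref{eq:relation to machine loads} together with the fact that in both optimal schedules every long job occupies its own machine while the other machine loads remain equal to~$1$, the identical calculation gives $\mathrm{OPT}(I') - \mathrm{OPT}(I) = -\delta$. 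Subtracting the same non-negative quantity $\delta$ from numerator and denominator of $\lambda(I) = \mathrm{WSPT}(I)/\mathrm{OPT}(I) \ge 1$ can only enlarge the ratio, exactly as in the closing step of the proof of Lemma~\ref{one medium job}, so $\lambda(I') \ge \lambda(I)$.

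The main obstacle will be to justify cleanly that the "each long alone, remainder balanced" structure of the optimum persists for~$I'$: once $\bar p \ge 1$ is in hand this is intuitive from convexity of $\sum_i M_i^2$, but ruling out potentially advantageous rearrangements (such as pairing two long jobs on one machine, or moving a tiny short job onto a long-job machine) requires a small but careful interchange computation to ensure the Lemma~\ref{Schwiegelshohn2} invariants remain intact throughout the transformation.
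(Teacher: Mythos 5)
Your proposal is correct and follows essentially the same route as the paper: average the long jobs, observe that both $\mathrm{WSPT}$ and $\mathrm{OPT}$ decrease by the same convexity gap $\delta$, and conclude via the ratio-monotonicity argument. (Minor slip: the quantity equal to $1$ is $M_{\mathrm{min}}^{\mathrm{WSPT}}(I)$, not $M_{\mathrm{min}}^{*}(I)$, and the balanced machines in the optimum have load $p(\cM\cup\cS)/(m-|\cL|)\ge 1$ rather than exactly $1$.)
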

\begin{proof}
 We replace the set $\cL$ of long jobs by $|\cL|$ jobs with processing time $p_j^\prime \coloneqq p(\cL)/|\cL|$. Then these jobs are still long, and no other jobs become long during this transformation, so the set $\cL'$ of long jobs in $I'$ consists exactly of the newly defined jobs. Besides, the sets of medium and short jobs are not affected by this transformation. The WSPT schedule for $I^\prime$ schedules the jobs in the same way as for $I$. Similarly, it is still optimal to schedule each long job on a machine of its own, and totally balance the loads of the remaining machines. Therefore, $I^\prime$ still satisfies the conditions of Lemma~\ref{one medium job}. The optimal and the WSPT schedule after the transformation are depicted in Figure~\ref{Figure3}.

 Note that
 \begin{align*}
  \sum_{j \in \cN(I')} p_j' - \sum_{j \in \cN(I)} p_j = \sum_{j \in \cL'} p_j' - \sum_{j \in \cL} p_j &= 0 \quad \text{and} \\
  \sum_{j \in \cN(I')} (p_j')^2 - \sum_{j \in \cN(I)} p_j^2 = \sum_{j \in \cL'} (p_j')^2 - \sum_{j \in \cL} p_j^2 &=: -\delta < 0,
 \end{align*}
 where the second inequality follows again from the convexity of the square function.
 
 Because in in the WSPT schedules for $I$ and $I'$ all long jobs start at time $M_{\mathrm{min}}^{\mathrm{WSPT}} \coloneqq M_{\mathrm{min}}^{\mathrm{WSPT}}(I) = M_{\mathrm{min}}^{\mathrm{WSPT}}(I^\prime)$ and all other jobs are unmodified, the same calculation as in the proof of Lemma~\ref{one medium job} shows that
 \[\sum_{j \in \cN(I')} p_j'S_j^{\mathrm{WSPT}}(I') = \sum_{j \in \cN(I)} p_j S_j^{\mathrm{WSPT}}(I),\]
 and thus,
 \begin{multline*}
  \mathrm{WSPT}(I') = \sum_{j \in \cN(I')} p_j'C_j^{\mathrm{WSPT}}(I') = \sum_{j \in \cN(I')} p_j'S_j^{\mathrm{WSPT}}(I') + \sum_{j \in \cN(I')} (p_j')^2 \\ = \sum_{j \in \cN(I)} p_j S_j^{\mathrm{WSPT}}(I) + \sum_{j \in \cN(I)} p_j^2 - \delta = \sum_{j \in \cN(I)} p_j C_j^{\mathrm{WSPT}}(I) - \delta = \mathrm{WSPT}(I) - \delta.
 \end{multline*}
 In the optimal schedule for $I$ and $I'$ all long jobs start at time $0$, so we also have that
 \[\sum_{j \in \cN(I')} p_j'S_j^{*}(I') = \sum_{j \in \cN(I)} p_j S_j^{*}(I),\]
 and hence,
 \begin{multline*}
  \mathrm{OPT}(I') = \sum_{j \in \cN(I')} p_j'C_j^{*}(I') = \sum_{j \in \cN(I')} p_j'S_j^{*}(I') + \sum_{j \in \cN(I')} (p_j')^2 \\ = \sum_{j \in \cN(I)} p_j S_j^{*}(I) + \sum_{j \in \cN(I)} p_j^2 - \delta = \sum_{j \in \cN(I)} p_j C_j^{*}(I) - \delta = \mathrm{OPT}(I) - \delta.
 \end{multline*}
 Together, this results in the inequality
 \[\lambda(I') = \frac{\mathrm{WSPT}(I')}{\mathrm{OPT}(I')} = \frac{\mathrm{WSPT}(I) - \delta}{\mathrm{OPT}(I) - \delta} = \lambda(I) + \frac{\delta}{\mathrm{OPT}(I)-\delta}(\lambda(I)-1) \ge \lambda(I)\]
 because $\delta < \mathrm{OPT}(I)$.
\end{proof}

The reduction used in the proof is illustrated in Figure~\ref{Figure3}.
\begin{figure}[t]
 \centering
 \begin{tikzpicture}[xscale=2,yscale=0.3]
  \filldraw[fill=lightgray] (0,11) rectangle +(293/150,1);
  \filldraw[fill=lightgray] (0,10) rectangle +(293/150,1);
  \filldraw[fill=lightgray] (0,9) rectangle +(293/150,1);
  \filldraw[fill=lightgray] (0,8) rectangle +(293/150,1);
  \filldraw[fill=lightgray] (0,7) rectangle +(293/150,1);
  \filldraw[fill=lightgray] (0,6) rectangle +(2/15,1);
  \foreach \j in {4,...,51} {
   \draw (\j/30,6) rectangle +(1/30,1);
  }
  \foreach \i in {0,...,5} {
   \foreach \j in {0,...,51} {
    \draw (\j/30,\i) rectangle +(1/30,1);
   }
  }
 \end{tikzpicture}
 \quad
 \begin{tikzpicture}[xscale=2,yscale=0.3,baseline]
  \filldraw[fill=lightgray] (1,11) rectangle +(293/150,1);
  \filldraw[fill=lightgray] (1,10) rectangle +(293/150,1);
  \filldraw[fill=lightgray] (1,9) rectangle +(293/150,1);
  \filldraw[fill=lightgray] (1,8) rectangle +(293/150,1);
  \filldraw[fill=lightgray] (1,7) rectangle +(293/150,1);
  \filldraw[fill=lightgray] (1,6) rectangle +(2/15,1);
  \foreach \i in {0,...,11} {
   \foreach \j in {0,...,29} {
    \draw (\j/30,\i) rectangle +(1/30,1);
   }
  }
  \draw[->] (0,-1) -- (3.4,-1);
  \draw (1,-0.5) -- (1,-1.5) node[below] {$t=1$};
 \end{tikzpicture}
 \caption{Optimal schedule and WSPT schedule for instance after the transformation of Lemma~\ref{long jobs equal}} \label{Figure3}
\end{figure}
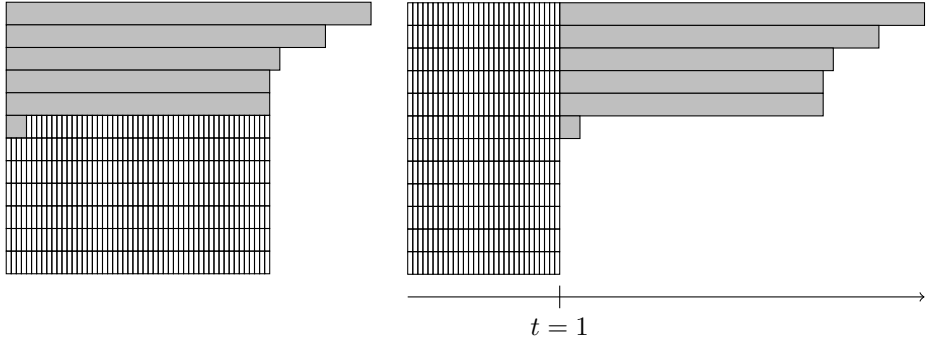

As in Section~\ref{Performance WSEPT based on alpha-points} we will analyze the limit for $\varepsilon \to 0$. %Again the sum of the squared processing times of the short jobs can be neglected, so that the limits for $\varepsilon \to 0$ 
The limits of the objective values of the WSPT schedule and the optimal schedule for an instance $I(\varepsilon)$ of the type of Lemma~\ref{long jobs equal} depend only on three variables: two real variables, viz.\ the length $x$ of the long jobs and the length $y$ of the medium job ($y=0$ if no medium job exists), and one integer variable, namely the number $k$ of long jobs. They are given by
\begin{align*}
 \lim_{\varepsilon \to 0} \mathrm{WSPT}(I(\varepsilon)) &= \frac m 2 + kx(1+x) + y(1+y),\\
 \lim_{\varepsilon \to 0} \mathrm{OPT}(I(\varepsilon)) &= k \cdot x^2 + \frac{(m+y)^2}{2(m-k)} + \frac{y^2}{2}.
\end{align*}
In Figure~\ref{Gantt charts}
\begin{figure}
 \begin{tikzpicture}[scale=2]
  \begin{scope}[yshift=2.1cm]
   \draw[<->] (2.1,0) -- (0,0) -- (0,2.1);
   \filldraw[fill=lightgray] (0,0) rectangle node {$x^2$} +(293/150,293/150);
   \node[anchor=east] at (0,293/300) {\footnotesize $k\ \times\ $};
  \end{scope}
  \begin{scope}
   \draw[<->] (1.9,0) -- (0,0) -- (0,1.9);
   \filldraw[fill=lightgray] (0,8/5) rectangle +(2/15,2/15);
   \draw[pattern color=gray, pattern=north east lines] (2/15,8/5) -- (26/15,0) -- (0,0) -- (0,8/5) -- cycle;
   \node[anchor=east] at (0,13/15) {\footnotesize $1\ \times\ $};
   \node at (29/45,15/45) {\footnotesize $\tfrac 1 2 \Bigl(\bigl(\frac{m+y}{m-k}\bigr)^2-y^2\Bigr)$};
   \node at (0.25,5/3) {$y^2$};
  \end{scope}
  \begin{scope}[yshift=-2.1cm]
   \draw[<->] (1.9,0) -- (0,0) -- (0,1.9);
   \draw[pattern color=gray, pattern=north east lines] (0,26/15) -- (26/15,0) -- (0,0) -- cycle;
   \node[anchor=east] at (0,13/15) {\footnotesize $(m-k-1)\ \times\ $};
   \node at (26/45,26/45) {$\frac 1 2 \left(\frac{m+y}{m-k}\right)^2$};
  \end{scope}
  \begin{scope}[yshift=1cm,xshift=2.5cm]
   \draw[<->] (3.1,0) -- (0,0) -- (0,3.1);
   \draw[pattern color=gray,pattern=north east lines] (0,443/150) -- (1,293/150) -- (0,293/150) -- cycle;
   \filldraw[fill=lightgray] (1,0) rectangle node {$x^2$} +(293/150,293/150);
   \node[anchor=east] at (0,443/300) {\footnotesize $k\ \times\ $};
   \node at (1/2,293/300) {$x$};
   \node at (1/3,343/150) {$\frac 1 2$};
   \draw (1,0.05) -- (1,-0.05) node[below] {$1$};
  \end{scope}
  \begin{scope}[yshift=-.65cm,xshift=2.5cm]
   \draw[<->] (1.3,0) -- (0,0) -- (0,1.3);
   \draw[pattern color=gray,pattern=north east lines] (0,17/15) -- (1,2/15) -- (0,2/15) -- cycle;
   \filldraw[fill=lightgray] (1,0) rectangle +(2/15,2/15);
   \node[anchor=east] at (0,17/30) {\footnotesize $1\ \times\ $};
   \node at (1/2,1/15) {$y$};
   \node at (1/3,7/15) {$\frac 1 2$};
   \node at (1.25,0.1) {$y^2$};
   \draw (1,0.05) -- (1,-0.05) node[below] {$1$};
  \end{scope}
  \begin{scope}[yshift=-2.1cm,xshift=2.5cm]
   \draw[<->] (1.1,0) -- (0,0) -- (0,1.1);
   \draw[pattern color=gray,pattern=north east lines] (0,1) -- (1,0) -- (0,0) -- cycle;
   \node[anchor=east] at (0,1/2) {\footnotesize $(m-k-1)\ \times\ $};
   \node at (1/3,1/3) {$\frac 1 2$};
   \draw (1,0.05) -- (1,-0.05) node[below] {$1$};
  \end{scope}
 \end{tikzpicture}
 \caption{Limit two-dimensional Gantt charts (see~\cite{EEI64,GW00}) for $\varepsilon \to 0$ of the optimal schedule and the WSPT schedule for instance $I(\varepsilon)$ satisfying the conditions of Lemma~\ref{long jobs equal}, illustrating the different parts of the objective values} \label{Gantt charts}
\end{figure}
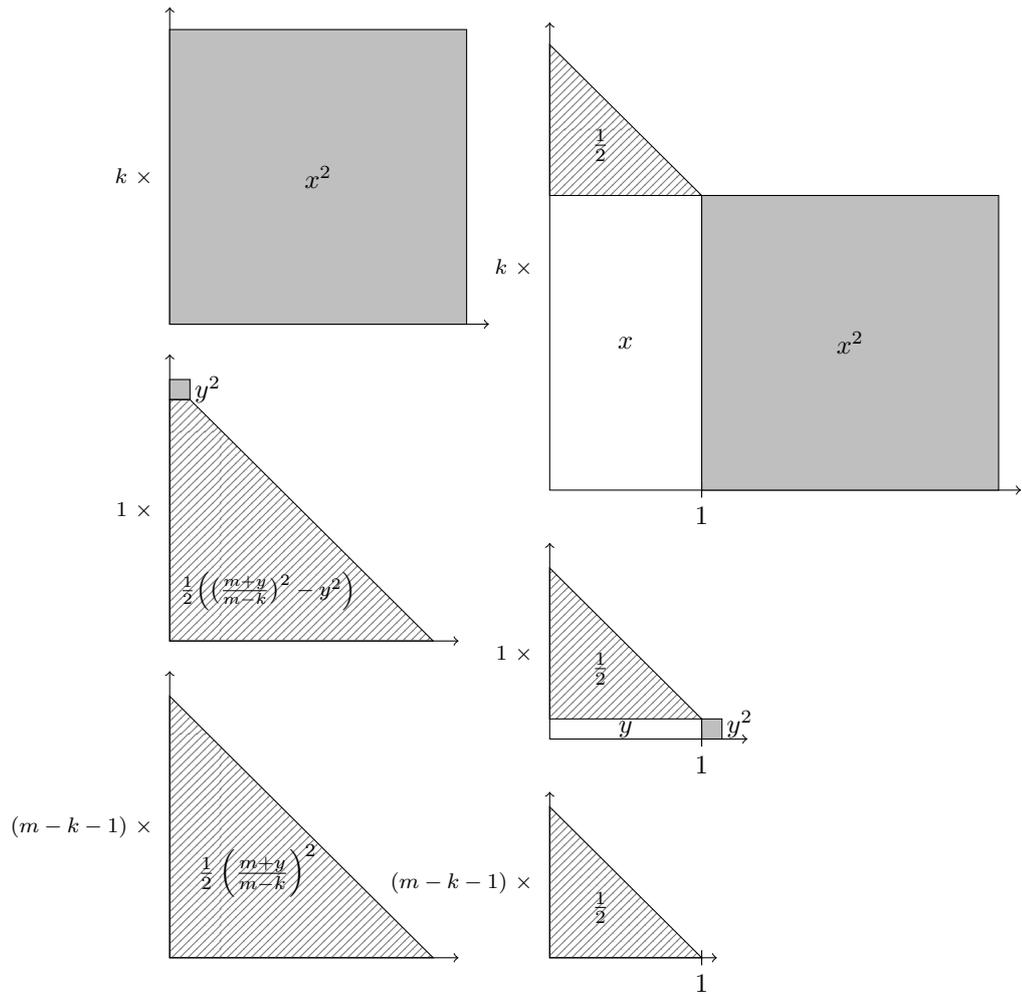
these formulas are illustrated via two-dimensional Gantt charts (see~\cite{EEI64,GW00})
for the three different types of single-machine schedules used by the WSPT schedule and the optimal schedule, respectively. In order to describe a valid scheduling instance of the prescribed type, the values $x$, $y$, and $k$ must lie in the domains
\[
 k \in \{0,\dotsc,m-1\},\qquad
 y \begin{cases} \in \Bigl[0,\frac{m}{m-k-1}\Bigr] &\text{if } k < m-1, \\ =0 &\text{if } k=m-1,\end{cases}\qquad
 x \in \left[\frac{y+m}{m-k},\infty\right).\]
\begin{lem} \label{lem:maximum}
The maximum of the ratio
\[
\lambda_m(x,y,k) \coloneqq \frac{\frac m 2 + kx(1+x) + y(1+y)}{k \cdot x^2 + \frac{(m+y)^2}{2(m-k)} + \frac{y^2}{2}} =  \frac{(m-k)(2kx^2+2kx+2y^2+2y+m)}{(m-k)(2kx^2+y^2)+(y+m)^2}
\] 
on the feasible domains is $1+\tfrac 1 2 (\sqrt{(2m-k_m)k_m}-k_m)/m$, and it is attained at
\[
 k_m \coloneqq \left\lfloor\Bigl(1-\frac12\sqrt 2\Bigr) m \right\rceil,\qquad
 y_m \coloneqq 0, \qquad
 x_m \coloneqq \frac{m}{\sqrt{(2m-k_m)k_m}-k_m}.
\]
\end{lem}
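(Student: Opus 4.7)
The plan is to optimize in stages: for fixed $k$, maximize jointly over $(x, y)$, argue this reduces to $y = 0$ (possibly after absorbing $y = y_{\max}$ into $k+1$ via Lemma~\ref{long jobs equal}), and finally optimize over integer $k \in \{0, \dotsc, m-1\}$.

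For fixed $k \ge 1$, the function $\lambda_m(\cdot, y, k)$ is a ratio of quadratics in $x$ tending to $1$ as $x \to \infty$, so the $x$-maximum is attained at an interior critical point $x^*(y, k)$ satisfying the quadratic $kx^2 - 2(B-A)x - B = 0$ with $A := m/2 + y + y^2$ and $B := (m+y)^2/(2(m-k)) + y^2/2$. The key simplification is that $\partial_x \lambda_m = 0$ forces $\lambda_m = N_x/D_x = k(1+2x)/(2kx) = 1 + 1/(2x)$, so maximizing $\lambda^{(k)}(y) := \max_x \lambda_m(x, y, k)$ over $y$ is equivalent to minimizing $x^*(y, k)$. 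A direct computation yields $\partial_y \lambda_m|_{y=0} = -kx(kx+m)/((m-k)D) < 0$, so $\lambda^{(k)}$ decreases initially as $y$ leaves $0$; a careful examination of the critical-point structure shows that any interior extremum of $\lambda^{(k)}$ on $[0, y_{\max}]$ (with $y_{\max} := m/(m-k-1)$) is a minimum, so the maximum is attained at an endpoint. The endpoint $y = y_{\max}$ is degenerate, since the medium job then becomes borderline long, making the configuration equivalent to an instance with $k+1$ long jobs of possibly unequal lengths; equalizing these by Lemma~\ref{long jobs equal} can only increase $\lambda$, giving $\lambda^{(k)}(y_{\max}) \le \phi(k+1)$ where $\phi(k') := \lambda^{(k')}(0)$. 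Thus $\max_{x,y}\lambda_m(x, y, k) \le \max\{\phi(k), \phi(k+1)\}$. At $y = 0$ the critical-point equation collapses to $x^2 - \frac{m}{m-k}x - \frac{m^2}{2k(m-k)} = 0$ with positive root $x_m(k) = m/(\sqrt{(2m-k)k} - k)$, and via $\lambda = 1 + 1/(2x)$ the value $\phi(k) = 1 + \frac{1}{2m}(\sqrt{(2m-k)k} - k)$.

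For the integer optimization, $\phi$ is concave on $[0, m]$ (since $g(k) := \sqrt{(2m-k)k}$ satisfies $g''(k) = -m^2/((2m-k)k)^{3/2} < 0$), with continuous maximum at $k^* = (1 - \frac12\sqrt 2)m$. Setting $n := m - k_m$, which is the nearest integer to $m/\sqrt 2$, the definition of nearest-integer gives $|2n - m\sqrt 2| \le 1$, i.e.\ $(2n-1)^2 \le 2m^2 \le (2n+1)^2$; since both $(2n\pm1)^2$ are odd while $2m^2$ is even, these strengthen by parity to $(2n-1)^2 \le 2m^2 - 1$ and $(2n+1)^2 \ge 2m^2 + 1$. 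A short algebraic manipulation shows these are respectively equivalent to $\phi(k_m + 1) \le \phi(k_m)$ and $\phi(k_m - 1) < \phi(k_m)$; the concavity of $\phi$ (whose first differences are non-increasing) then extends both inequalities to $\phi(k) \le \phi(k_m)$ for every integer $k \in \{0, \dotsc, m-1\}$, completing the argument. The main obstacle will be the first-stage claim that $\lambda^{(k)}(y)$ has no interior maximum on $[0, y_{\max}]$, which is needed to justify the endpoint reduction: numerical experiments suggest that $y \mapsto x^*(y, k)$ is cap-shaped (making $\lambda^{(k)}$ cup-shaped with endpoint maxima), but establishing this rigorously requires a careful study of the implicit function defined by the critical-point equation.
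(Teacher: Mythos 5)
Your overall skeleton---reduce to $y=0$, optimize $x$ for fixed $k$ via the critical-point identity $\lambda_m=1+1/(2x)$, then optimize over integer $k$ by concavity plus a nearest-integer argument---matches the structure of the paper's proof, and the parts you actually carry out are correct. In particular your quadratic for $x^*$ at $y=0$, the value $\phi(k)=1+\frac{1}{2m}\bigl(\sqrt{(2m-k)k}-k\bigr)$, and your parity-strengthened inequalities $(2n-1)^2\le 2m^2-1$ and $(2n+1)^2\ge 2m^2+1$ are exactly equivalent to the paper's statement that $k_m$ lies in the half-open window of length one centered at $u_m=m-\frac12\sqrt{2m^2-1}$, the point around which $\phi$ takes equal values at distance $\frac12$; combined with concavity this gives $\phi(k)\le\phi(k_m)$ for all integers $k$, just as in the paper.

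The step you yourself flag as ``the main obstacle'' is, however, a genuine gap and not a technicality: you must rule out an interior maximum of the upper envelope $\lambda^{(k)}(y)=\max_x\lambda_m(x,y,k)$, and doing so requires controlling the implicit function $y\mapsto x^*(y,k)$, for which you offer only numerical evidence. A second unaddressed issue lurks in the same step: the identity $\lambda=1+1/(2x^*)$ presupposes that the inner maximum over $x$ is attained at the interior critical point rather than at the boundary $x=(y+m)/(m-k)$, which itself moves with $y$. The paper sidesteps the envelope entirely: it first notes that a global maximizer $(x^*,y^*,k^*)$ exists (the domain is closed and $\lambda_m\to1$ as $x\to\infty$) and then examines the one-variable section $y\mapsto\lambda_m(x^*,y,k^*)$ with $x^*$ \emph{frozen}. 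This section is a ratio of quadratics in $y$ whose derivative is non-positive at $y=0$ (your own computation of $\partial_y\lambda_m|_{y=0}<0$ shows this) and has at most one root in the feasible range, hence it is quasi-convex and its maximum---which is attained at $y^*$---must sit at an endpoint; the endpoint $y=m/(m-k-1)$ is then absorbed into an instance with $k+1$ long jobs via Lemma~\ref{long jobs equal}, exactly as you propose. Replacing your envelope analysis by this fixed-$x^*$ section argument closes the gap, and the remainder of your proof goes through.
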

\begin{proof}
This set of possible combinations is closed. Since for $x \to \infty$ this ratio converges to $1$, the maximum is attained at some point~$(x^*,y^*,k^*)$.

Assume now that $y^* \neq 0$. Then $k^* < m-1$, and the function $y \mapsto \lambda(x^*,y,k^*)$ is thus quasi-convex (since the derivative has at most one root in the feasible region and is non-positive at $y=0$). Therefore, the maximum is attained at the boundary, i.e., $y^* = m/(m-k-1)$. But then the job is in fact a long job, and by Lemma~\ref{long jobs equal} it has the same length as the other long jobs, i.e., $x=y=m/(m-k-1)$. By the definitions of $k$, $x$, and $y$, however, this instance is properly described by the parameter values $k' = k+1$, $y' = 0$, and $x' = x$.
So we have shown that $y^* = 0$, and only the values $x$ and $k$ maximizing
\[\lambda_m'(x,k) \coloneqq \frac{(m-k)(2kx^2+2kx+m)}{(m-k)2kx^2+m^2}\]
remain to be determined.
For every fixed $k$, the maximum is attained at
\[x_{k,m} \coloneqq \frac{m}{\sqrt{(2m-k)k}-k},\]
as can be seen by calculating the roots of the derivative. Plugging this in, we obtain the univariate function
\[\lambda_m''(k) \coloneqq 1+\frac{\sqrt{(2m-k)k}-k}{2m},\]
whose maximum over $\{0,\dotsc,m-1\}$ is to be determined. Notice that~$\lambda_m''(0)=1$ is not maximal.
As a function on the interval $[1,m-1]$, the function $\lambda_m''$ is concave (since the second derivative is negative) and continuous. Furthermore,
\[u_m \coloneqq m-\frac{\sqrt{2m^2-1}}{2}\]
satisfies the equation $\lambda_m''(u_m-\frac 1 2) = \lambda_m''\bigl(u_m+\frac 1 2\bigr)$. Therefore, for every\linebreak $\xi \in \bigl[u_m-\frac 1 2,u_m+\frac 1 2\bigr)$ and every $\zeta \in [1,m-1] \setminus \bigl[u_m-\frac 1 2,u_m+\frac 1 2\bigr)$ it holds that $\lambda_m''(\xi) \ge \lambda_m''(\zeta)$. In particular, the unique integer $k$ in the interval\linebreak $\bigl[u_m-\frac 1 2,u_m+\frac 1 2\bigr)$ satisfies that $\lambda_m''(k) \ge \lambda_m''(k')$ for all $k' \in \{1,\dotsc,m-1\}$. We prove that
\[
k_m \coloneqq \left\lfloor \Bigl(1 - \frac{\sqrt 2}{2}\Bigr) m \right\rceil
\]
always lies in this interval. Firstly, $k_m \le \bigl(1 - \frac{\sqrt 2}{2}\bigr)m + \frac 1 2 < u_m + \frac 1 2$. Secondly, notice that $\sqrt{2m^2}$ is irrational such that $\left\lceil\sqrt{2m^2}\right\rceil\le\sqrt{2m^2-1}+1$ as the interval $\left(\sqrt{2m^2-1},\sqrt{2m^2}\right]$ does not contain an integer; therefore,
\[
k_m = m-\left\lfloor\frac{\left\lceil\sqrt{2m^2}\right\rceil}{2}\right\rfloor 
\ge m - \left\lfloor \frac{\sqrt{2m^2-1}+1}{2} \right\rfloor 
\ge m - \frac{\sqrt{2m^2-1}+1}{2} = u_m - \frac 1 2 .
\]
Figure~\ref{graph of f}
\begin{figure}
 \includegraphics[width=.45\textwidth]{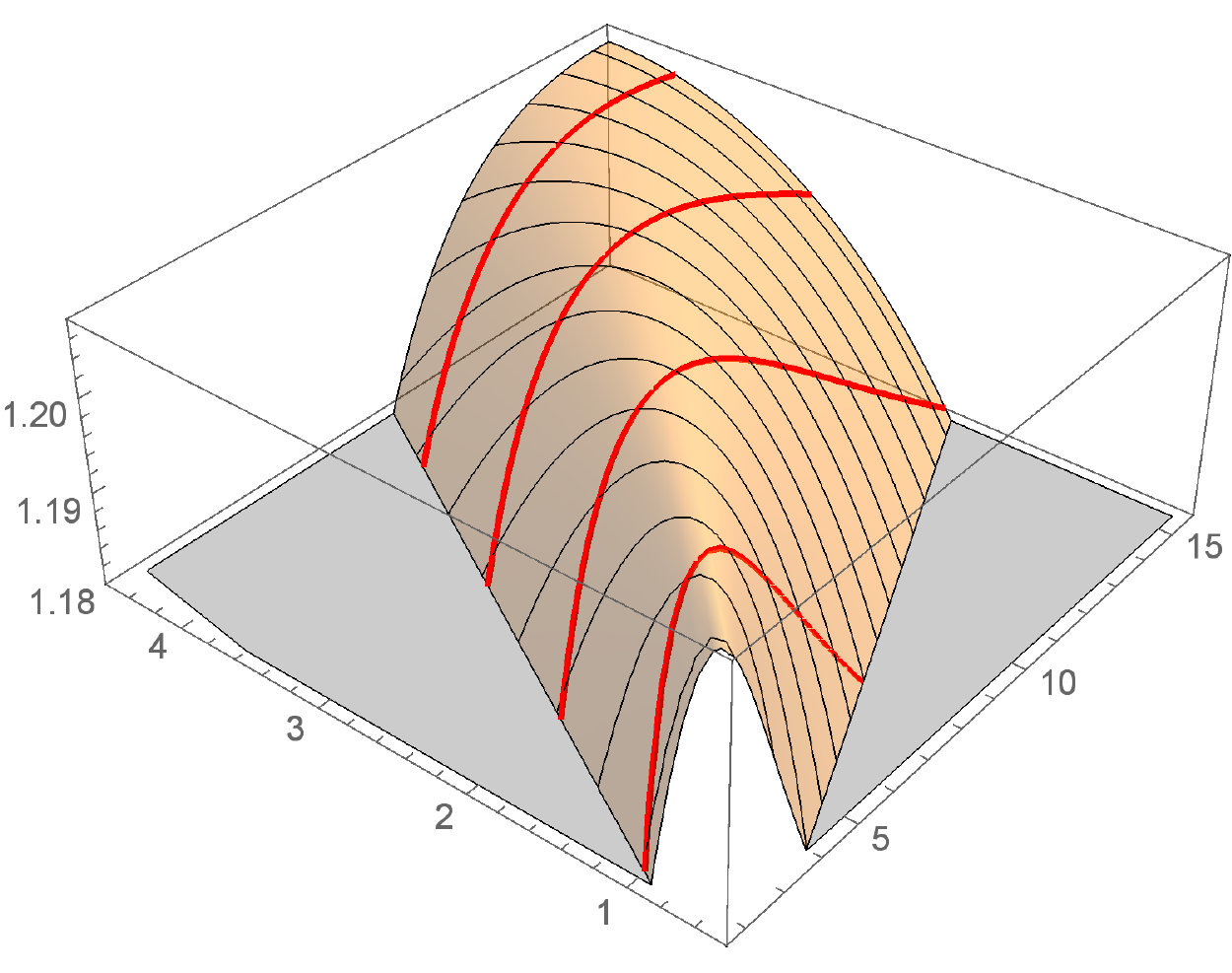} \includegraphics[width=.45\textwidth]{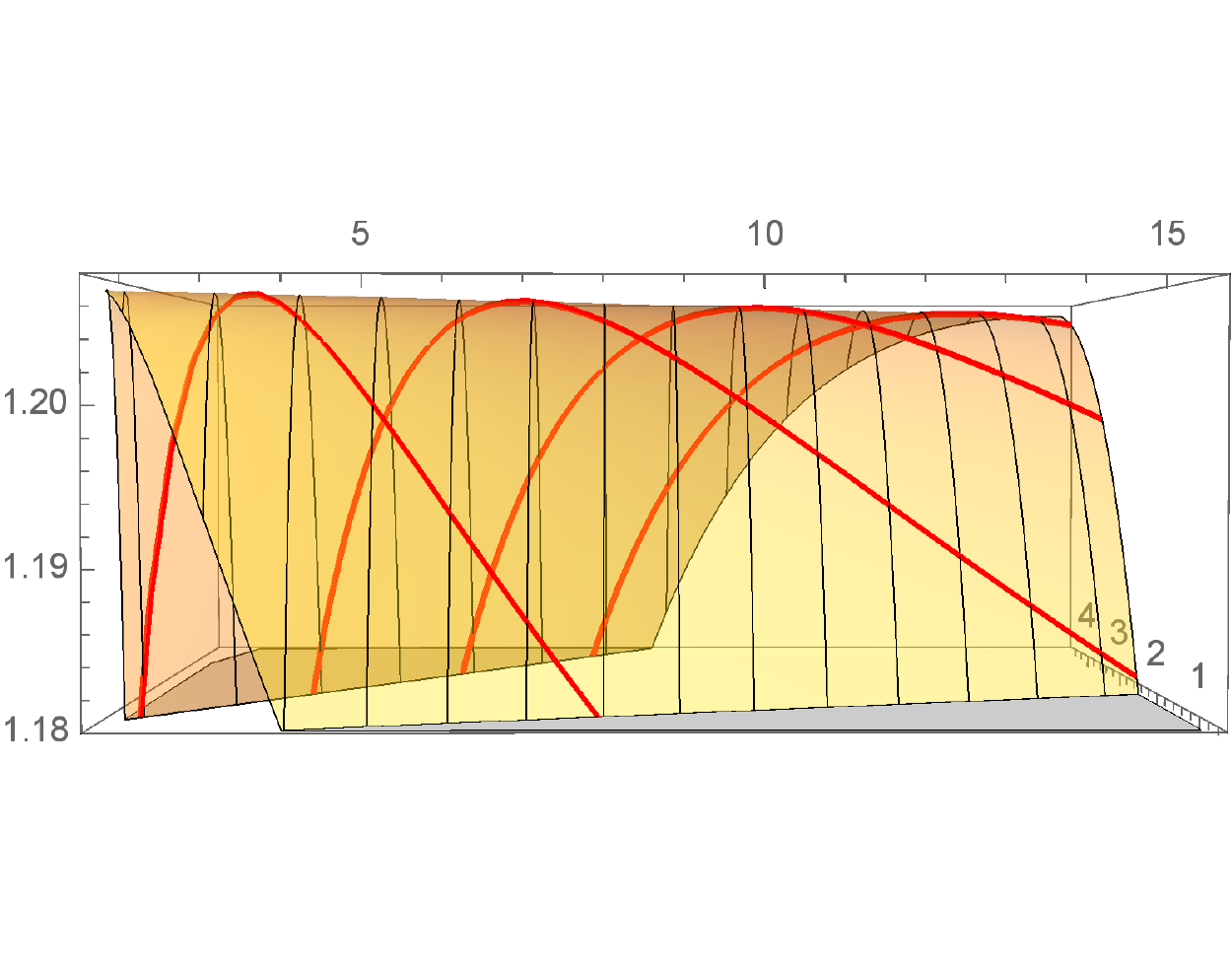}
 \caption{Graph of the function $(m,k) \mapsto \lambda_m''(k) = 1 + \frac{\sqrt{(2m-k)k}-k}{2m}$ showing the worst-case performance ratio with $m$ machines and $k$ long jobs} \label{graph of f}
\end{figure}
shows the function $(m,k) \mapsto \lambda_m''(k)$ with marked (thick) lines for integral values of~$k$. One can also see the concave dependence on $k$ (thin lines). 
\end{proof}

This concludes the proof of Theorem~\ref{theorem:Performance guarantee WSPT}. In Figure~\ref{graph of performance guarantee} the graph of
the function $m \mapsto \lambda_m(x_m,0,k_m)$, whose values at integral $m$ are exactly the worst-case approximation ratios for instances with $m$ machines, is depicted. The jumps and kinks occur when the number $k_m$ of long jobs in the worst-case instance changes. By taking the limit for $m \to \infty$, we obtain alternative proof of the performance ratio $\tfrac 1 2(1+\sqrt{2})$ by Kawaguchi and Kyan~\cite{KK86}, avoiding the somewhat complicated transformation and case distinction in the proof of Lemma~\ref{non-short jobs same length} and Schwiegelshohn's proof~\cite{Sch11}. For increasing $m$ the tight performance ratio converges quite quickly to $\tfrac 1 2(1+\sqrt 2)$: the difference lies in $O(1/m^2)$.
By plugging in the machine-dependent performance ratio into Theorem~\ref{Theorem Performance guarantee WSEPT}, we obtain the following performance ratio for the WSEPT rule.
\begin{corollary} \label{magenta performance guarantee}
 For instances of the problem $P|\bm p_j \sim \mathrm{stoch}|\E[\sum w_j \bm C_j]$ with $m$ machines the WSEPT rule has performance ratio
 \[1 + \frac 1 2 \cdot \frac{\sqrt{(2m-k_m)k_m}-k_m}{m} (1+\Delta).\]
\end{corollary}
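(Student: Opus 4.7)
The plan is to obtain this corollary as an immediate consequence of the two main machinery results proved earlier in the paper. Theorem~\ref{theorem:Performance guarantee WSPT} gives a tight, machine-dependent performance guarantee for the deterministic WSPT rule on $P||\sum w_j C_j$, namely $1+\beta_m$ with
\[
\beta_m \;=\; \tfrac 1 2 \cdot \frac{\sqrt{(2m-k_m)k_m}-k_m}{m}.
\]
Theorem~\ref{Theorem Performance guarantee WSEPT} is precisely a black-box transfer result: any WSPT bound of the form $1+\beta_m$ for the deterministic problem on $m$ machines lifts to a WSEPT bound of the form $1+\beta_m(1+\Delta)$ for the stochastic problem on $m$ machines.

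So the proof amounts to invoking Theorem~\ref{theorem:Performance guarantee WSPT} to get the value of $\beta_m$, and then invoking Theorem~\ref{Theorem Performance guarantee WSEPT} with this $\beta_m$ to conclude
\[
1+\beta_m(1+\Delta) \;=\; 1 + \tfrac 1 2 \cdot \frac{\sqrt{(2m-k_m)k_m}-k_m}{m}\,(1+\Delta),
\]
which is exactly the claim.

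There is nothing subtle to verify: the machine count $m$ appearing in both theorems is the same, $k_m$ is a deterministic quantity depending only on $m$, and $\Delta$ enters only through Theorem~\ref{Theorem Performance guarantee WSEPT}. No calculation or additional reduction is required. Consequently I do not anticipate any obstacle in the argument; the substantive work has already been done in Sections~\ref{Performance guarantee WSEPT} and~\ref{Performance guarantee WSPT}, and this corollary is the intended payoff of combining them.
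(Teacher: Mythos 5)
Your proposal is exactly the paper's argument: the corollary is obtained by plugging the machine-dependent WSPT bound of Theorem~\ref{theorem:Performance guarantee WSPT} (i.e.\ $\beta_m = \tfrac12(\sqrt{(2m-k_m)k_m}-k_m)/m$) into the black-box transfer of Theorem~\ref{Theorem Performance guarantee WSEPT}, and nothing further is needed. Correct, and the same approach as the paper.
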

This bound is better than the bound of Corollary~\ref{green performance guarantee} only if $m$ and $\Delta$ both are small. Even for two machines, it is outdone for large $\Delta$ (see Figure~\ref{fig:performance bound 2 machines}).
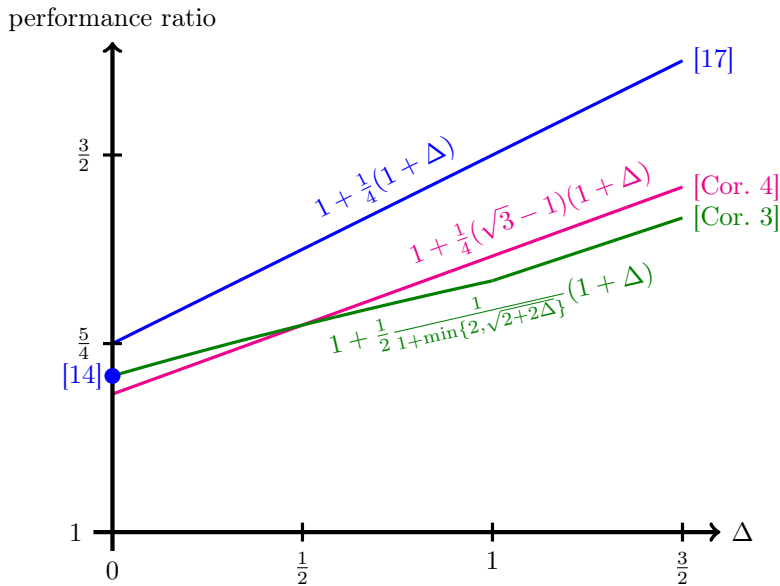
\begin{figure}
 \begin{tikzpicture}[x=5cm,y=10cm,domain=0:1]
  \draw [blue,very thick] (0,1.25) to node [sloped,above] {$1+\tfrac14(1+\Delta)$} +(1.5,1.5/4) node [right] {\cite{MSU99}};
  \draw [magenta,very thick] (0,{1+(sqrt(3)-1)/4}) to node [sloped, above,near end] {$1+\tfrac14(\sqrt3-1)(1+\Delta)$} +(1.5,{(sqrt(3)-1)/4*1.5}) node [right] {[Cor.~\ref{magenta performance guarantee}]};
  \draw [green!50!black,very thick] plot (\x,{1+1/2*1/(1+sqrt(2+2*\x))*(1+\x)});
  \node [rotate=13,green!50!black] at (1,1.29) {$1+\tfrac 1 2 \frac{1}{1+\min\{2,\sqrt{2 + 2\Delta}\}}(1+\Delta)$};
  \draw [green!50!black,very thick] (1,4/3) to +(0.5,0.5/6) node[right] {[Cor.~\ref{green performance guarantee}]};

  \draw [->,ultra thick] (-0.05,1) node [left] {$1$} -- (1.6,1) node [right] {$\Delta$}; 
  \draw [->,ultra thick] (0,0.975) node [below] {$0$} -- (0,1.65) node [above] {performance ratio};
  \foreach \i/\j in {0.5/\frac12,1/1,1.5/\frac32}
   \draw [very thick] (\i,0.9875) node [below] {$\j$} -- +(0,0.025);
  \foreach \i/\j in {1.25/\frac54,1.5/\frac32}
   \draw [very thick] (-0.025,\i) node [left] {$\j$} -- +(0.05,0);

  \draw [blue] (0,1.2071) node [left] {\cite{KK86}} node [circle,fill,minimum size=6pt,inner sep=0pt] {};
 \end{tikzpicture}
 \caption{Bound on WSEPT's performance ratio for two machines} \label{fig:performance bound 2 machines}
\end{figure}

%% Für Journal-Version %%%
\section{Open problem} \label{Open problem}

For every fixed value of $\Delta$, one obtains the machine-independent performance ratio of $1+\tfrac 1 2 (1+\min \{2,\sqrt{2+2\Delta}\})^{-1}(1+\Delta)$ for instances with squared coefficient of variation bounded by $\Delta$. On the other hand, for every fixed $m$ our performance bound tends to infinity when $\Delta$ goes to infinity, so that it does not imply a constant performance ratio (independent of $\Delta$) for instances with a constant number of machines. As far as the authors know, the question if such a $\Delta$-independent constant performance ratio of WSEPT for a fixed number of machines exists is still open. The examples of Cheung et al.~\cite{CFMM14} and of Im et al.~\cite{IMP15} only show that no constant performance ratio can be given when $\Delta$ and $m$ are allowed to go simultaneously to infinity.

\bibliography{references}

\end{document}